	\theoremstyle{plain}
	\newtheorem{theorem}{Theorem}
	\newtheorem{lemma}{Lemma}
	\theoremstyle{definition}
	\newtheorem{remark}{Remark}
\newenvironment{customhypothesis}[1]
 {\hypothesis}
  {\endhypothesis}
\newcommand\Eb{\mathds{E}}
\newcommand\Fb{\mathds{F}}
\newcommand\Gb{\mathds{G}}
\newcommand\Ib{\mathds{1}}
\newcommand\Lb{\mathds{L}}
\newcommand\Pb{\mathds{P}}
\newcommand\Qb{\mathds{Q}}
\newcommand\Rb{\mathds{R}}
\newcommand\Ec{\mathcal{E}}
\newcommand\Fc{\mathcal{F}}
\newcommand\Gc{\mathcal{G}}
\newcommand\Hc{\mathcal{H}}
\newcommand\Nc{\mathcal{N}}
\newcommand\Qc{\mathcal{Q}}
\newcommand\Tc{\mathcal{T}}
\newcommand\Uc{\mathcal{U}}
\newcommand\tauh{\widehat{\tau}}
\newcommand\Wt{\widetilde{W}}
\newcommand\Xt{\widetilde{X}}
\newcommand\Yt{\widetilde{Y}}
\newcommand\Mt{\widetilde{M}}
\newcommand\mut{\widetilde{\mu}}
\DeclareMathOperator*{\arginf}{arg\,inf}
\definecolor{asparagus}{rgb}{0.0, 0.5, 0.0}
\definecolor{amaranth}{rgb}{0.9, 0.17, 0.31}
\newcommand{\eqlnostar}[2]{\begin{align}\label{#1}#2\end{align}}
\newcommand{\eqstar}[1]{\begin{align*}#1\end{align*}}
\newcommand{\eq}[1]{\ifthenelse{\equal{#1}{*}}
  {\eqstar}
  {\eqlnostar{#1}}
 }
\begin{document}

\title{Robust valuation and optimal harvesting of forestry resources in the presence of catastrophe risk and parameter uncertainty}

\author[a]{Ankush Agarwal}
\author[b,c,d]{Christian Ewald}
\author[b]{Yihan Zou\footnote{Corresponding author. {Emails: \url{aagarw93@uwo.ca} (A. Agarwal), \url{christian.ewald@inn.no} and \url{christian.ewald@umu.se} (C. Ewald), 
\url{yihan.zou@outlook.com} (Y. Zou).}}\thanks{Acknowledgements: The authors thank Kevin Kamm for his valuable assistance with model parameter estimation.}}

\affil[a]{\small Department of Statistical and Actuarial Sciences, University of Western Ontario, London, Canada.}
\affil[b]{\small Adam Smith Business School, University of Glasgow, Glasgow, United Kingdom.}
\affil[c]{\small University of Inland, Lillehammer, Norway.}
\affil[d]{\small School of Mathematics and Statistics, Umeå University, Umeå, Sweden.}

\date{\today}

\maketitle 

\vspace{20mm}
\begin{abstract}
\noindent We determine forest lease value and optimal harvesting strategies under model parameter uncertainty within stochastic bio-economic models that account for catastrophe risk. Catastrophic events are modeled as a Poisson point process, with a two-factor stochastic convenience yield model capturing the lumber spot price dynamics. Using lumber futures and US wildfire data, we estimate model parameters through a Kalman filter and maximum likelihood estimation and define the model parameter uncertainty set as the 95\% confidence region. We numerically determine the forest lease value under catastrophe risk and parameter uncertainty using reflected backward stochastic differential equations (RBSDEs) and establish conservative and optimistic bounds for lease values and optimal stopping boundaries for harvesting, facilitating Monte Carlo simulations. Numerical experiments further explore how parameter uncertainty, catastrophe intensity, and carbon sequestration impact the lease valuation and harvesting decision. In particular, we explore the costs arising from this form of uncertainty in the form of a reduction of the lease value. These are implicit costs that can be attributed to climate risk and will be emphasized through the importance of forestry resources in the energy transition process. We conclude that in the presence of parameter uncertainty, it is better to lean toward a conservative strategy reflecting, to some extent, the worst case than being overly optimistic. Our results also highlight the critical role of convenience yield in determining optimal harvesting strategies.

\smallskip

\noindent \textbf{Keywords}: Real options; model uncertainty; ambiguity; forests; catastrophe risk; climate risk; simulation.

\end{abstract}




\section{Introduction}
Conventional approaches to forest lease valuation and harvesting strategies often rely on the real options approach, which adapts financial option pricing methods to model investment timing under various types of risks (see, e.g., a review in \citet{dixit1994investment,nadarajah2023review,trigeorgis2018real}). While real options models are adequate for capturing investment flexibility under stochastic price dynamics, they typically assume stable parameters and overlook the importance of model parameter uncertainty. This limitation is particularly significant in the management of natural resources, where environmental and market conditions can be unpredictable, and catastrophic events, such as severe wildfires and storms, add layers of uncertainty to investment decisions. 

These uncertainties are further compounded by the increasing severity and frequency of climate events, now widely believed to be driven by human activities (\citet{IPCC2023}). However, what remains less clear is how this trend will progress and how these growing uncertainties will impact investment timing in natural resource management, particularly in decisions like the optimal harvesting of forest land. These compound events add complexities that extend beyond typical market fluctuations, necessitating a robust framework to assess and manage this increased uncertainty based on historical data. This need aligns with open problems 27-28 in operations research (OR) challenges for forestry raised by \citet{ronnqvist2015operations}. In this context, our work aims to provide a practical, data-driven approach that incorporates catastrophe risk and accounts for parameter uncertainty to capture the effects of climate-driven risks on optimal forest harvesting.

Carbon sequestration in forests plays a crucial role in the broader context of energy transmission and sustainability. Forests act as significant carbon sinks, absorbing carbon dioxide (CO2) from the atmosphere, which can mitigate climate change and contribute to energy systems that rely on renewable resources. The integration of forest carbon sequestration into energy policies can enhance the sustainability of energy transmission, particularly through the use of biomass as a renewable energy source. Restoring and conserving forests is recognized as a cost-effective strategy for reducing CO2 emissions, see \citet{10.1007/s44177-022-00028-y}. The carbon captured by forests can be utilized in various ways, including the production of bioenergy. This bioenergy, derived from forest biomass, is mostly considered carbon-neutral because the CO2 released during biomass combustion is offset by the CO2 absorbed during the growth of the trees, see \citet{10.4236/jep.2012.39114} and \citet{10.3390/su11030863}. Thus, the sustainable management of forests not only contributes to carbon sequestration but also supports the development of renewable energy sources that can be transmitted through existing energy infrastructures.

The economic implications of forest biomass for energy production are, in fact, very significant. Using forest residues and low-value wood can create additional income streams for forest owners, incentivizing them to maintain forest health and productivity, see \citet{10.1111/gcbb.12643}. This economic viability is crucial for promoting sustainable forest management practices that enhance carbon sequestration while simultaneously providing renewable energy resources. The potential of biomass energy, particularly when combined with carbon capture and storage (BECCS), offers a pathway to achieve negative emissions, which is vital for mitigating climate change, \citet{10.1002/ghg.34}.

In our paper, we account for the value of carbon sequestration and, therefore, implicitly the value that forests contribute to the energy transmission process and study how this impacts the optimal harvesting behaviors in the presence of parameter uncertainty and catastrophic risks. More generally, our work builds on and extends a body of real options literature addressing challenges in natural resource management.

Previous studies have applied the real options approach to model optimal harvesting strategies and forest valuation under stochastic price dynamics (e.g., \citet{insley2002real, kallio2012real}). Some studies considered the impact of price fluctuations alongside other stochastic factors, such as timber inventory (\citet{morck1989valuation}) and infestation rates (\citet{sims2011optimal}). However, only a few studies have explored the role of catastrophic events, such as wildfires, which can severely impact forestry investments. For example, \citet{reed1993decision} and \citet{yin1996effect} examined optimal forest harvesting under catastrophe risk but in simplified settings that allowed for analytical valuation results.

In addition, the real options approach has been used to evaluate a broader range of real-life investment projects under catastrophe risk. For instance, \citet{truong2016s} employed a net present value approach for evaluating projects aimed at reducing catastrophe risk, assuming climate change impacts are deterministic and fully known upfront. \citet{truong2018managing} extended this framework for climate adaptation projects by integrating climate change risk into investment decisions. \citet{kort2022preventing} investigated projects in which firms could influence the probability of catastrophic events. However, none of these models have examined the impact of catastrophe model parameter uncertainty on optimal investment timing and valuations.

Furthermore, our work aligns with real options studies that examine the impact of model uncertainty or ambiguity on optimal investment timing and project valuation. For example, \citet{nishimura2007irreversible} studied the optimal stopping problem for irreversible investments within a one-dimensional framework under various ambiguity settings. \citet{trojanowska2010worst} explored the optimal investment timing of a firm with stochastic profits under a specific type of model uncertainty known as $\kappa$-ignorance, while \citet{thijssen2011incomplete} addressed irreversible investment problems under maxmin utility over $\kappa$-ignorance in incomplete markets. In a related approach, \citet{cartea2017irreversible} studied robust indifference pricing of real options in incomplete markets, defining the ambiguity set with an entropic penalty function. \citet{hellmann2018fear} examined investment timing in a competitive context, showing how a firm without ambiguity aversion could leverage its rival’s ambiguity aversion under multiple-prior maxmin preferences. Another relevant study is \citet{loisel2022ambiguity}, which analyzed the classical Faustmann rotation problem of forest harvesting under storm risk and ambiguity. However, their work did not incorporate a continuous-time framework, omitted market risk, and focused only on scenario and frequency ambiguity of storms. Our study addresses these limitations by modelling forest harvesting decisions under catastrophe risk within a continuous-time framework that incorporates both market risk and parameter uncertainty.

Specifically, we model the arrival of catastrophes as a Poisson point process and employ a two-factor stochastic convenience yield model (\citet{schwartz1997stochastic}), renowned for its ability to produce realistic term structures for commodity futures prices. Using lumber futures data, we estimate the two-factor model with the Kalman filter and maximum likelihood estimator (MLE), following approaches similar to those in \citet{schwartz1997stochastic,trolle2009unspanned}. The Poisson jump intensity of catastrophic events is calibrated using data from major US wildfires by MLE. We then define the parameter uncertainty set as the 95\% confidence region inferred from these estimates. To assess the impact of statistical parameter uncertainty, we formulate forest lease values under catastrophe risk and parameter uncertainty as solutions to reflected backward stochastic differential equations (RBSDEs), mainly based on the theoretical result from \citet{el1997reflected}. Through Monte Carlo methods, we establish conservative and optimistic bounds for lease values and reveal the optimal stopping boundaries based on a modified version of the Stratified Regression One-step Forward Dynamic Programming (SRODP) algorithm of \citet{agarwal2023monte}. Additionally, we conduct numerical experiments to investigate how model uncertainty affects forest lease values, considering factors such as catastrophe jump intensity and the inclusion of carbon sequestration value in harvesting strategies. The results underscore the influence of convenience yield on optimal harvesting decisions and confirm that a conservative probability belief accelerates optimal harvesting while an optimistic probability belief delays it. Moreover, the inclusion of carbon sequestration value further postpones the optimal harvesting time. Finally, assessed against the case of no uncertainty, we investigate the loss of lease value that originates from the presence of parameter uncertainty. This reflects the true cost of uncertainty and can indirectly be attributed to climate risk. To do so, we evaluate the performance of the optimistic and conservative harvesting strategies under the market pricing measure and compare these against the ``true'' optimal strategy under the no uncertainty assumption.  

Our study makes several valuable contributions to the field of OR and the existing literature: 
\begin{enumerate}
    \item We develop a mathematical framework for addressing the practical problem of forest harvesting under catastrophe risk, incorporating statistical parameter uncertainty - a novel approach within the fields of real options and forestry management.
    \item We employ sophisticated mathematical tools, specifically reflected backward stochastic differential equations (RBSDEs), to answer important and relevant questions about forestry management and real options; establishing a new paradigm for addressing optimal stopping problems for irreversible investments within multi-dimensional stochastic frameworks.
    \item Our analysis is grounded in a rich, real-world dataset, leveraging market data to estimate the uncertainty region. Consequently, our findings reflect parameters for lumber price dynamics estimated from actual market data and catastrophe intensity parameters calibrated to historical data on severe disasters.
    \item We offer several insights for industry practitioners and forestry risk managers, demonstrating how parameter uncertainty influences optimal harvesting times and emphasizing the critical role of convenience yield in harvesting decisions. Additionally, numerical results confirm that incorporating carbon sequestration value delays the optimal harvesting time.
    \item We determine the optimal stopping boundary that triggers the harvesting decision under parameter uncertainty through numerical experiments, making our analysis particularly relevant for practical forestry management decisions.
\end{enumerate}

The remainder of the paper is organized as follows: Section \ref{sec: the model} presents the modelling framework, detailing the incorporation of catastrophe risk and parameter uncertainty in forest lease valuation and optimal harvesting. Section \ref{sec: data summary} outlines the methodology for parameter estimation and presents the estimation results. Section \ref{sec: numerical results} introduces the Monte Carlo method and discusses numerical results for robust optimal forest harvesting and valuation under conservative, no uncertainty and optimistic scenarios. Finally, Section \ref{sec: conclusion} discusses
the main conclusions of this work. Appendix \ref{appendix: algo} details our numerical algorithm for optimal harvesting and valuation. Appendix \ref{appendix: proofs} contains proofs for all theoretical results.

\section{Problem formulation}\label{sec: the model}

We are concerned with harvesting a publicly owned forest for a single rotation from the perspective of a social planner or agent. The agent is assumed to have a leasehold on forest land with merchantable trees whose wood volume grows (in weight) as time progresses. The agent decides the time to harvest the forest and has the objective to maximize profit from the sale of lumber. There is a harvesting cost associated with cutting trees as well as a loss of future amenity benefits (net of management costs). Moreover, the agent loses potential increase in wood volume if the trees are harvested. The agent not only bears market risk due to uncertain lumber prices but is also exposed to the occurrence of catastrophic disasters, e.g., wildfires or storms. We thus propose a mathematical framework for modelling stochastic lumber prices and the intensity of the occurrence of catastrophic events. We use a popular, deterministic growth function to model the biological growth of the forest.
\subsection{Dynamics under the real-world probability measure}\label{sec: dynamics under the real-world probability measure}
Classical approaches to model lumber prices include the use of mean-reverting diffusion processes (\citet{insley2002real}) and multi-factor models (\citet{morck1989valuation}). Here, we use a two-factor stochastic convenience yield model of \citet{schwartz1997stochastic}. Let the state vector $X: = (\delta, P)^\top$ denote the collection of net convenience yield and lumber spot price, respectively. In this context, net convenience yield refers to the benefit of holding lumber inventories minus storage costs. A positive convenience yield indicates that the holding benefit exceeds the storage expense. For simplicity, we refer the net convenience yield as convenience yield hereafter. We assume that the dynamics of the state vector $X$ under the real-world probability measure  $\Pb$ are given as
\eqlnostar{eq: state vector 1}{
\dd X_t & = \mu^X(t,X_t)\dd t + \Sigma^X(t, X_t)\dd W_t,
}
where $\mu^X$ and $\Sigma^X$ are $\Rb^{2}$ and $\Rb^{2\times 2}$-valued functions. $\mu^X$, $\Sigma^X$ and $W_t$ are defined as
\eqstar{
\mu^X(t,X_t) = \begin{pmatrix} \kappa^{\delta}(\mu^{\delta} - \delta_t) \\ \mu^P(\delta_t)P_t\end{pmatrix},\, \Sigma^X(t,X_t) = \begin{pmatrix} \sigma^{\delta} & 0\\ \rho\sigma^P P_t & \sqrt{1-\rho^2}\sigma^P P_t
\end{pmatrix},\,W_t = \begin{pmatrix}W^1_t\\W^2_t\end{pmatrix}.
}
In the above, $W = (W^1, W^2)^\top$ denotes a standard two-dimensional Brownian motion under $\Pb$. $\kappa^{\delta}, \mu^{\delta}, \sigma^{\delta}$ and $\sigma^P$ are constants, $\mu^P(\cdot)$ is an $\Rb$-valued function, and $\rho\in (-1,1)$ is also a constant. In addition, the risk-free money market account $B$ is defined as $B_{s,t} := \exp((t-s) r), s\leq t$ where $r$ is the constant risk-free interest rate. Assume that under $\Pb$ we have a right-continuous filtration $\Fb = (\Fc_t)_{0\leq t\leq T},$ generated by $W$ which drives the market movements, but excludes any information on natural catastrophes. Hence, the filtration $\Fb$ contains only the market information, defining the probabilistic setup of the market $(\Omega,(\Fc_t)_{0\leq t\leq T},\Pb)$. 

Next, we describe the details of modelling the catastrophe risk using an intensity-based approach. 
We use a random time $\xi: \Omega \rightarrow [0, +\infty),$ announcing the event of a catastrophe and define the process $N_t := \Ib_{\{\xi\leq t\}}$, as a step process taking a value of zero on the event set $\{\xi>t\}$ and one otherwise. We denote by $\Gb = (\Gc_t)_{t\geq 0},$ the minimal enlargement of the filtration $\Fb$ such that $\xi$ is a $\Gb$-stopping time. Generally, the random time $\xi$ is not necessarily an $\Fb$-stopping time, which means that the agent might not know when the catastrophe occurs by observing information from the financial market only. However, we assume that the agent observes $\Gb$, that is, knows when the catastrophe has happened. Furthermore, the following well-known hypothesis \ref{hypothesis H} is assumed to hold throughout this article (see \citet{elliott2000models} or \citet{bielecki2004credit} for a detailed discussion in the context of credit risk).
\begin{customhypothesis}{{\bf (H)}}\label{hypothesis H}
Every $\Fb$-square integrable martingale is a $\Gb$-square integrable martingale.
\end{customhypothesis}
Hypothesis \ref{hypothesis H} implies that the $\Fb$-Brownian motion $W$ remains a Brownian motion on $\Gb$ and thus will justify our change of measure in Section \ref{sec: Q measure}. We further assume that the step process $N$ has a $\Gb$-compensator under $\Pb$ with an intensity $\lambda^{\Pb}$ which is an $\Fb$-predictable density process. This means that the occurrence of the catastrophe is modeled by a first jump of an (inhomogeneous) Poisson process $N$ with stochastic intensity $\lambda^{\Pb}\in(0,\infty)$. The stopping time $\xi$ is then an exponentially distributed random variable conditioned on the market information $\Fc_t$. Therefore, we can denote the compensated Poisson process $M$ of the step process $N$ by
\eqlnostar{}{M_t = N_t - \int_0^{t\wedge \xi}\lambda^{\Pb}_s\dd s, \;\;\text{for}\;\; 0\leq t\leq T,}
which is a $\Gb$-martingale under the measure $\Pb$. Naturally, the conditional exponential distribution suggests the following representation for the continuous conditional probability of the event $\{\xi > t\}$:
\eqlnostar{}{\Pb(\xi > t|\Fc_t) = \exp\left(-\int_0^t \lambda^{\Pb}_s\dd s\right),\,\,\text{for}\,\,0\leq t\leq T.}

\subsection{Valuation of futures and forestry investments under a risk-neutral measure}\label{sec: Q measure}
Agents will decide to harvest the forestry resource by considering its valuation at any time $t$. Valuations will be carried out under a measure that reflects relevant hedging opportunities, diversification, as well as relevant risk premia. This is, in principle, achieved by any risk-neutral equivalent martingale measure (EMM). As our market model proposed in Section \ref{sec: dynamics under the real-world probability measure} is intrinsically incomplete due to the presence of stochastic convenience yield, multiple EMMs can exist, reflecting the agents' beliefs. Thus, to complete our probabilistic structure, we first characterize the set $\Qc$ of equivalent martingale measures (EMMs). As usual, we say that a probability measure $\Qb$ is an EMM, i.e. $\Qb\in\Qc$, if and only if $\Qb$ is equivalent to the reference real-world measure $\Pb$ and the discounted prices of tradable assets under $\Qb$ are martingales. We will work under the assumption that hypothesis \ref{hypothesis H} holds for any $\Qb\in\Qc$. Through a Girsanov-type argument as used in \citet{kusuoka1999remark}, we obtain $\Qb$ as
\eqlnostar{eq: RN derivative 1}{\frac{\dd \Qb}{\dd \Pb}\Big|_{{\Gc}_t} = \eta_t^{\theta,\phi},\,\,\forall E\in\Gc_t,}
and its associated density process $\eta^{\theta,\phi}$ satisfies $\eta^{\theta,\phi}_0=1.$ Moreover,
\eqlnostar{}{\dd \eta_t^{\theta,\phi} = \eta_{t-}^{\theta,\phi}\left({\theta_t}^\top\dd W_t + (\phi_t - 1)\dd M_t\right),\,\,\text{for}\,\,0\leq t\leq T,}
where $\theta = (\theta^1, \theta^2)^\top$ and $\phi\in(0,\infty)$, satisfy technical conditions such that $\eta^{\theta,\phi}$ is a $\Gb$-adapted strictly positive martingale. The model dynamics under $\Qb$ are then given as
\eqlnostar{eq: state vector 2}{
\dd X_t & = \mut^X(t,X_t)\dd t + \Sigma^X(t, X_t)\dd \Wt_t,}
where $(\Wt_t)_{0\leq t\leq T}$ is a two-dimensional $\Qb$-Brownian motion and
\eqlnostar{}{
\mut^X(t,X_t) = \mu^X(t,X_t) + \Sigma^X(t,X_t) \theta_t = \begin{pmatrix} \kappa^{\delta}(\mut^{\delta} - \delta_t) \\ (r - \delta_t)P_t\end{pmatrix} 
, \,\,\Wt_t = \begin{pmatrix}\Wt^1_t\\ \Wt^2_t\end{pmatrix} = \begin{pmatrix} W^1_t-\theta_t^1\\ W^2_t-\theta_t^2\end{pmatrix},}
with the ``risk-neutral'' constant parameter $\mut^{\delta} = \mu^{\delta} + \sigma^{\delta}\theta^1_t/\kappa^{\delta}$. The compensated Poisson process $\Mt$ given as
\eqlnostar{}{\Mt_t = N_t - \int_0^{t\wedge\xi}\phi_s\lambda_s^{\Pb}\dd s,\,\,\text{for}\,\,0\leq t\leq T,}
is a $\Qb$-martingale, and $\lambda^{\Qb} = \phi\lambda^{\Pb}$ is the corresponding $\Qb$-intensity of the step process $N$.
\begin{remark}
Technically, the condition $\phi\in(0,\infty)$ justifies the change of measure in \eqref{eq: RN derivative 1} and guarantees that $\Qb$ is equivalent to $\Pb$. Moreover, it also means that the probability of the occurrence of a potential catastrophe is always positive in a ``risk-neutral'' world since $\lambda^{\Qb} = \psi\lambda^{\Pb}$ and the assumption $\lambda^{\Pb}\in(0,\infty)$.
\end{remark}
Notably, since no-arbitrage only requires that the condition $\mu^P(\delta_t) + \rho\sigma^P\theta^1_t + \sqrt{1-\rho^2}\sigma^P\theta^2_t = r - \delta_t,\,t\in[0, T]$ holds and otherwise $\theta$ can take arbitrary form, the EMM is not necessarily unique in our case, i.e. the market is intrinsically incomplete. Moreover, the market price of catastrophe risk $\phi\lambda^{\Pb}$ is also not reflected in the spot price, thus, $\phi$ is also undetermined. 
Given any EMM $\Qb$, we know that futures prices for maturity $T$ determined under $\Qb$  
\eqlnostar{eq: futures price 1}{F(t, T) := F(P_t,\delta_t,r,t,T) = \Eb^\Qb\left(P_T|\mathcal{F}_t\right)}
are consistent with the no-arbitrage condition, i.e., the corresponding market, which includes the trading of futures with the prices specified as above, remains arbitrage-free. With the introduction of futures contracts, the market also becomes complete, and a single EMM $\Qb$ is designated as the market measure, i.e., the EMM adopted by the market. Prices of all assets need to be determined under the market measure, which we will henceforth simply denote as $\Qb$. That said, due to statistical uncertainty, the agents in the model may not know the market measure exactly and apply other EMMs to determine their optimal harvesting policies and their own subjective values of the forestry resource. While futures prices can be observed on the market, it is not always possible to extract the pricing measure $\Qb$ exactly from the futures prices; at the least, there will be a statistical error, resulting in uncertainty for the agents, and the possibility of a range of EMMs from the agents' perspectives, as we will specify later.

Nevertheless, the ``true'' value of the forest lease is equivalent to the value of a financial option obtained under the market measure $\Qb$, where exercising corresponds to harvesting the forest. We define the value process as follows:
\eqlnostar{eq: value function 1}{
V_t := \Ib_{\{\xi\leq t\}}\zeta_{\xi} + \underset{\tau\in\Tc_{t,T}}{\sup}\, \Eb^{\Qb}\left(B_{t,\tau}^{-1}\left(P_{\tau}G_{\tau} - K\right)\Ib_{\{\tau<\xi\}} + \int_{t\wedge\xi}^{\tau\wedge\xi} B_{t,u}^{-1} A_u\dd u\bigg| \Gc_t\right),}
where $\zeta$ is a bounded, $\Fb$-adapted continuous process. 
$\Tc_{t, T}$ denotes the set of stopping times with respect to the filtration $(\Fc_u)_{t\leq u\leq T}$ with values in $[t, T]$, $K$ is the fixed harvesting cost, $A$ is an $\Rb$-valued $\Fb$-predictable process representing the amenity value (net of management costs) that flows from the forest in period $t$ if the trees are not harvested, and $G$ is a deterministic, age-dependent growth function of the timber weight per hectare. The next result evaluates the worth of forest land up until the moment a catastrophe occurs.
In the event that the catastrophe has not happened yet, that is, $\Ib_{\{\xi>t\}}$, we notice that the following is true for a given $\tau \in \Tc_{t,T}$ by following Lemma 3.1 in \citet{elliott2000models} 
\eqstar{
&\Ib_{\{\xi>t\}}\Eb^{\Qb}\left(B_{t,\tau}^{-1}\left(P_{\tau}G_{\tau} - K\right)\Ib_{\{\tau<\xi\}} + \int_{t\wedge\xi}^{\tau\wedge\xi} B_{t,u}^{-1} A_u\dd u\bigg| \Gc_t\right)\\
&= \Ib_{\{\xi>t\}}\Eb^{\Qb}\left(B_{t,\tau}^{-1}\left(P_{\tau}G_{\tau} - K\right)\Gamma^{\lambda}_{t, \tau} + \int_{t}^{\tau} B_{t,u}^{-1}\Gamma^{\lambda}_{t,u} A_u\dd u\bigg| \Fc_t\right),
}
where $\Gamma^{\lambda}_{s,t}:=\Qb(\xi>t|\xi>s,\Fc_s) = \exp\left(-\int_s^t \lambda^{\Qb}_u\dd u\right).$ The above follows since $G$ is deterministic, $P$ and $A$ are $\Fb$-adapted, and $\xi$ is an exponentially distributed random variable conditioned to $\Fc_t$. Thus, going forward we focus on computing 
\eqlnostar{eq: F-reduced Q value}{
\sup_{\tau \in \Tc_{t,T}}\Eb^{\Qb}\left(B_{t,\tau}^{-1}\left(P_{\tau}G_{\tau} - K\right)\Gamma^{\lambda}_{t, \tau} + \int_{t}^{\tau} B_{t,u}^{-1}\Gamma^{\lambda}_{t,u} A_u\dd u\bigg| \Fc_t\right) =: v_t. 
}
Based on the definition above, we have the following relation $V_t \Ib_{\{\xi>t\}} = v_t.$ We provide a rigorous justification of this relationship in the technical Theorem \ref{theorem 2}.
\subsection{Framework for model uncertainty}\label{sec: uncertainty set}
After calibrating a stochastic model to derivative prices, we usually take the EMM implied by the calibrated model as the ``true'' pricing measure. However, the calibrated model parameters are either statistically uncertain or locally optimal. Thus, there is a natural model uncertainty when choosing a calibrated model as the ``true'' pricing measure since the ``true'' pricing measure is necessarily given by a set of equivalent martingale pricing measures. 
Let us fix the measure corresponding to parameter estimates as our reference pricing measure $\Qb$ and define model uncertainty relative to it using the confidence interval for the parameter estimates. To introduce model uncertainty, we assume that the model is subject to uncertainty in the ``risk-neutral'' drift and catastrophe intensity parameters $(\kappa^{\delta}, \mut^{\delta}, \lambda^{\Qb})$ through their $\Fb$-predictable replacements $ (\kappa^{\delta,u}, \mut^{\delta,u}, \lambda^{u})$, where $\lambda^{u}:[0, T]\times\Omega\rightarrow\Rb_+$. The ambiguous parameters are indexed by $(u_t)_{0\leq t\leq T}$ which is a stochastic control process that determines their values. The values of $(\kappa^{\delta,u}, \mut^{\delta,u}, \lambda^{u})$ are assumed to lie in an uncertainty set which is a hypercube $U$
\eqstar{U:= [\underline{\kappa},\overline{\kappa}]\times[\underline{\mu},\overline{\mu}]\times[\underline{\lambda},\overline{\lambda}]\subset\Rb_+\times\Rb\times\Rb_+.}
Then we can define a space $\Uc$ of admissible control processes as
\eqstar{\Uc := \left\{u\in \Lb^2\left(\Fb,\Rb^3\right): (\kappa_t^{\delta,u}, \mut_t^{\delta,u}, \lambda_t^{u})(\omega)\in U,\dd t\times\dd \Pb-a.s.\right\}
}

Each $u$ implies a different EMM $\Qb^u$ through the density generators $\alpha$ and $\psi,$ defined below. Like in \eqref{eq: RN derivative 1}, we define $\Qb^u$ with $\Qb$ as the reference measure by
\eqlnostar{eq: RN derivative 2}{\frac{\dd \Qb^u}{\dd \Qb}\Big|_{{\Gc}_t} = \eta_t^{\alpha,\psi},\,\,\forall E\in\Gc_t,}
with $\eta^{\alpha,\psi}$ satisfying $\eta^{\alpha,\psi}_0=1$ and
\eqlnostar{eq: density 2}{\dd \eta_t^{\alpha,\psi} = \eta_{t-}^{\alpha,\psi}\left({\alpha_t}^\top\dd \Wt_t + (\psi_t - 1)\dd \Mt_t\right),\,\,\text{for}\,\,0\leq t\leq T,}
where $\alpha := (\alpha^{u,1}, \alpha^{u,2})^\top$ and $\psi = \frac{\lambda^u}{\lambda^{\Qb}}\in(0,\infty)$. This means the process $\eta^{\alpha,\psi}$ 
can be explicitly expressed as ${\eta_t^{\alpha,\psi} 
= \eta^{\alpha}_t \eta^{\psi}_t,}$
with
\eqlnostar{}{\eta^{\alpha}_t  = 
\exp\left(\int_0^t{\alpha_k}^\top\dd \Wt_k - \frac{1}{2}\int^t_0\langle\alpha,\alpha\rangle_k\dd k\right), && 
\eta^{\psi}_t =
\exp\left(\int_{(0, \, t]}\ln\psi_k\dd N_k - \int_0^{t\wedge\xi}\lambda^u_k\dd k\right).}
The model dynamics under $\Qb^u$ are then given as
\eqlnostar{eq: state vector 3}{
\dd X_t & = \mut^u(t,X_t)\dd t + \Sigma^X(t, X_t)\dd \Wt^u_t,}
where $\left(\Wt^u_t\right)_{0\leq t\leq T}$ is a two-dimensional $\Qb^u$-Brownian motion and the no-arbitrage principle implies
\eqlnostar{}{
\mut^u(t,X_t) = \mut^X(t,X_t) + \Sigma^X(t,X_t) \alpha_t = \begin{pmatrix}\kappa^{\delta,u}_t\left(\mu^{\delta,u}_t - \delta_t\right)\\ (r - \delta_t)P_t\end{pmatrix}, \,\,\Wt^u_t = \begin{pmatrix}\Wt^{u,1}_t\\ \Wt^{u,2}_t\end{pmatrix} = \begin{pmatrix} \Wt^1_t-\alpha_t^{u,1}\\ \Wt^2_t-\alpha_t^{u,2}\end{pmatrix}.}
Simple calculations yield that the pair $(\alpha^{u,1},\alpha^{u,2})$ in the density process $\eta^{\alpha,\psi}$ should satisfy
\eqlnostar{eq: alpha condition}{\alpha_t = \begin{pmatrix}\alpha^{u,1}_t\\ \alpha^{u,2}_t
\end{pmatrix} = \begin{pmatrix}\frac{\kappa^{\delta,u}_t \mu^{\delta,u}_t - \kappa^{\delta}\mu^{\delta} - \left(\kappa^{\delta,u}_t - \kappa^{\delta}\right)\delta_t}{\sigma^{\delta}}\\ \frac{- \rho\left(\kappa^{\delta,u}_t \mu^{\delta,u}_t - \kappa^{\delta}\mu^{\delta} - \left(\kappa^{\delta,u}_t - \kappa^{\delta}\right)\delta_t\right)}{\sqrt{1-\rho^2}\sigma^{\delta}}
\end{pmatrix},\,\,\text{for}\,\,t\in[0,T].}

Intuitively, the above equality in conjunction with $\psi = \frac{\lambda^u}{\lambda^{\Qb}}$ provides a connection between the uncertain parameters $(\kappa^{\delta,u}, \mut^{\delta,u}, \lambda^{u})$ and the set $\Qc^U$ of EMMs under uncertainty through the density process defined in \eqref{eq: density 2}. Thus, we can write $\alpha_t$ and $\psi_t$ as $\alpha(t,u_t)$ and $\psi(t,u_t)$ for every $t\in[0, T]$ relating the control process $(u_t)$ with the density generators. This implies that each EMM $\Qb^u$ is associated with each $u$, that is, choice of parameters $(\kappa^{\delta,u}, \mut^{\delta,u}, \lambda^{u})$. 
Since each agent may have a different choice of uncertain parameters within their discretion, we will henceforth refer to $\Qb^u$ as the subjective measure reflecting the corresponding agent's subjective beliefs. Valuation of the forest lease under $\Qb^u$ will result in a price different from the price obtained under the market measure $\Qb$ and a different optimal harvesting policy. In the following, we will refer to the lease values obtained under $\Qb^u$ with $u \neq 0$ as subjective values. 



\subsection{Conservative and optimistic valuation under model uncertainty}\label{sec: robust valuation}
The previous continuous conditional probability of the event $\{\xi > t\}$ can be defined under $\Qb^u$ as 
\eqlnostar{}{\Gamma^{u}_{s,t}:=\Qb^u(\xi>t|\xi>s,\Fc_s) = \exp\left(-\int_s^t \lambda^{u}_k\dd k\right).}
Without the knowledge of actual market pricing measure, the agent, will evaluate whether to harvest at time $\tau$ before the catastrophic disaster happens, under their subjective EMM $\Qb^u$. Then based on the result in  \eqref{eq: F-reduced Q value}, we define the $\Fb$-reduced subjective value function of the forest lease under $\Qb^u$ as
\eqlnostar{eq: F-reduced Qu value}{v^{\Qb^u}_t := \underset{\tau\in\Tc_{t,T}}{\sup}\, \Eb^{\Qb^u}\left(B^{-1}_{t,\tau}\Gamma^{u}_{t,\tau}\left(P_{\tau}G_{\tau} - K\right) + \int_t^{\tau}B_{t,k}^{-1}\Gamma^{u}_{t,k}A_k\dd k\Big|\Gc_t\right),\,\,\text{for}\,\,t\in[0,T].}
Thus, for every choice of $(\kappa^{\delta,u}, \mut^{\delta,u}, \lambda^{u})$, we obtain a different value of the forest lease, reflecting the effect of model uncertainty on the lease valuation. In our framework of model uncertainty, we are able to connect $v^{\Qb^u}_t$ defined above to the solution of a reflected backward stochastic differential equation (RBSDE) through the following result.
\begin{lemma}\label{lemma: Qu value}
\begin{enumerate}
    \item[(i)] The process $v^{\Qb^u}$ given by \eqref{eq: F-reduced Qu value} satisfies, for every $t\in[0,T]$,
    \eqlnostar{eq: F-reduced Qu value under Q}{v^{\Qb^u}_t = \underset{\tau\in\Tc_{t,T}}{\sup}\, (\eta^{\alpha}_t)^{-1}\Eb^{\Qb}\left(B^{-1}_{t,\tau}\Gamma^{u}_{t,\tau}\eta^{\alpha}_{\tau}\left(P_{\tau}G_{\tau} - K\right) + \int_t^{\tau}B^{-1}_{t,k} \Gamma^{u}_{t,k}\eta^{\alpha}_{k}A_k \dd k\bigg|\Fc_t\right).}
    \item[(ii)] Let the $\Fb$-progressively measurable triplet $\{(Y^u,Z^u,K^u), 0\leq t\leq T\}$ taking values in $\Rb\times\Rb^2\times\Rb_+$ be a unique solution to the RBSDE
    \eqlnostar{eq: rbsde Qu 1}{Y^u_t = S_{T} + \int_t^T f\left(s,X_s,Y^u_s,Z^u_s,u_s\right)\dd s + K^u_T - K^u_t - \int_t^T {Z^u_s}^{\top}\dd \Wt_s,\;\text{for}\;t\in[0,T],}
    where $S_t := S(t,X_t) = P_t G_t - K$, $f\left(t,x,y,z,u\right) = A_t - (r+\lambda^u) y + \alpha^{u,1} z^1 + \alpha^{u,2} z^2$ with $x := (x^1,x^2)^\top$ and $z := (z^1,z^2)^\top$, $K^u$ is a non-decreasing, continuous process with $K_0^u = 0$ ensuring $Y^u_t\geq S_t, \,t\in[0,T]$ with $\int_0^T(Y^u_t - S_t)\dd K_t^u = 0$. Then, for every $t\in[0,T]$, $Y^u_t = v^{\Qb^u}_t$, $\Qb$-a.s.
\end{enumerate}
\end{lemma}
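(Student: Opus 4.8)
For part (i), the plan is to rewrite the expectation under $\Qb^u$ in \eqref{eq: F-reduced Qu value} as an expectation under the reference measure $\Qb$ using the abstract Bayes rule. Since $\frac{\dd\Qb^u}{\dd\Qb}\big|_{\Gc_t} = \eta^{\alpha,\psi}_t = \eta^{\alpha}_t\eta^{\psi}_t$ and the density splits into a continuous (Girsanov) part $\eta^\alpha$ driven by $\Wt$ and a jump part $\eta^\psi$ affecting only $N$, the key observation is that conditioning on the non-occurrence of the catastrophe absorbs the $\eta^\psi$ factor precisely into the survival-probability correction: for an $\Fc_\tau$-measurable payoff, $\Eb^{\Qb^u}(\cdot|\Gc_t)$ on $\{\xi>t\}$ reduces — exactly as in the passage preceding \eqref{eq: F-reduced Q value} and via Lemma 3.1 of \citet{elliott2000models} — to an $\Fc_t$-conditional expectation with the discount factor $\Gamma^u$. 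Applying Bayes' rule for the remaining continuous change of measure $\Qb^u\leftrightarrow\Qb$ contributes the ratio $\eta^\alpha_\tau/\eta^\alpha_t$ (the $\eta^\psi$ part having already been accounted for in passing to $\Gamma^u$ and the $\Fb$-conditioning). One must also check that the supremum over $\Tc_{t,T}$ is unaffected — it is, since the multiplicative factor $(\eta^\alpha_t)^{-1}$ is $\Fc_t$-measurable and positive, hence pulls out of the sup. This yields \eqref{eq: F-reduced Qu value under Q}.

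For part (ii), the plan is to identify \eqref{eq: F-reduced Qu value under Q} as the Snell-envelope representation of the solution to a reflected BSDE and then invoke the existence/uniqueness and representation theory of \citet{el1997reflected}. Concretely, set $\bI_t := B^{-1}_{0,t}\Gamma^u_{0,t}\eta^\alpha_t$, a strictly positive $\Fb$-adapted process, and observe that \eqref{eq: F-reduced Qu value under Q} reads $\bI_t v^{\Qb^u}_t = \esssup_{\tau\in\Tc_{t,T}}\Eb^{\Qb}\big(\bI_\tau S_\tau + \int_t^\tau \bI_k A_k\,\dd k\,\big|\,\Fc_t\big)$, i.e. $\bI v^{\Qb^u}$ is the Snell envelope (under $\Qb$) of the discounted-reward process $\bI_\tau S_\tau + \int_0^\tau \bI_k A_k\,\dd k$. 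By the general theory, this Snell envelope is the first component of the solution to a reflected BSDE with terminal value $\bI_T S_T$, obstacle $\bI_t S_t$, and zero driver (after subtracting the accumulated running reward). The final step is to undo the discounting: writing $Y^u_t := \bI_t^{-1}(\text{Snell envelope})$ and applying It\^o's product rule to $\bI^{-1}$ together with its dynamics — $\dd\bI_t = \bI_t\big(-(r+\lambda^u_t)\,\dd t + \alpha_t^\top\dd\Wt_t\big)$ — transforms the reflected BSDE for $\bI Y^u$ into the reflected BSDE \eqref{eq: rbsde Qu 1} with driver $f(t,X_t,y,z,u) = A_t - (r+\lambda^u_t)y + \alpha^{u,1}_t z^1 + \alpha^{u,2}_t z^2$; the reflection process transforms as $\dd K^u_t = \bI_t^{-1}\dd(\text{Snell reflection})$, remains non-decreasing, and the flatness condition $\int_0^T (Y^u_t - S_t)\,\dd K^u_t = 0$ is preserved since $\bI>0$. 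Standard integrability checks ($S$ and $A$ in the right $\Lb^2$ spaces, $\bI$ and $\bI^{-1}$ bounded on $[0,T]$ given $u\in\Uc$ and the hypercube bounds on $U$) ensure the hypotheses of \citet{el1997reflected} hold, giving uniqueness of the triplet $(Y^u,Z^u,K^u)$ and the identification $Y^u_t = v^{\Qb^u}_t$, $\Qb$-a.s.

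I expect the main obstacle to be the careful bookkeeping in part (i): making rigorous the reduction that simultaneously (a) passes from the $\Gc_t$-conditioning to $\Fc_t$-conditioning by absorbing the catastrophe timing, and (b) performs the Bayes change of measure, so that the jump-part density $\eta^\psi$ exactly produces the $\Gamma^u$ factor while leaving the clean continuous ratio $\eta^\alpha_\tau/\eta^\alpha_t$. One must be attentive to the stopping time $\tau$ being an $\Fb$-stopping time (an element of $\Tc_{t,T}$ defined via $\Fb$, not $\Gb$), which is what makes the payoff $\Fc_\tau$-measurable and the whole reduction valid; and to the fact that Hypothesis \ref{hypothesis H} holds under every $\Qb^u\in\Qc^U$, so that $\Wt$ (resp. $\Wt^u$) remains a Brownian motion after the change of measure. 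The It\^o-calculus step in part (ii) is then essentially routine, modulo tracking that the product-rule cross term from $\dd\langle\bI^{-1},\,\cdot\,\rangle$ contributes nothing extra to the driver beyond the linear-in-$z$ term already displayed.
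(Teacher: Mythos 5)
Your proposal is correct and follows essentially the same route as the paper: for (i) the paper likewise reduces the Radon--Nikodym density to $\Fb$ by showing $\Eb^{\Qb}(\eta^{\psi}_t|\Fc_t)=1$ (so the $\Fb$-density is just $\eta^{\alpha}$) and then applies the abstract Bayes formula, which is the rigorous version of your ``the jump part is absorbed into $\Gamma^u$'' bookkeeping. For (ii) the paper simply cites Proposition 7.1 of \citet{el1997reflected} with the adjoint process $\Gamma_t = B^{-1}_{0,t}\Gamma^u_{0,t}\eta^{\alpha}_t$ --- exactly your $\bI_t$ --- so your Snell-envelope-plus-It\^o-product-rule argument is an unpacking of the very result the paper invokes.
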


In Lemma \ref{lemma: Qu value}(i), we re-express $v^{\Qb^u}_t$ under the reference measure $\Qb,$ which allows to derive the result in Lemma \ref{lemma: Qu value}(ii). Thus, for every choice of $(\kappa^{\delta,u}, \mut^{\delta,u}, \lambda^{u})$ coming from the model uncertainty framework, we can solve for the forest lease value function by solving an RBSDE like the one stated in \eqref{eq: rbsde Qu 1}. However, since the set of admissible control processes $\Uc$ contains infinite possibilities, we instead focus on computing the so-called \textit{optimistic value} $v^+$ and \textit{conservative value} $v^-$. We define these to be the upper and lower subjective value of the forest lease before the catastrophe under parameter uncertainty described by $\Uc.$ To simplify notations, we will denote $v^{\Qb^u}$ and $\Eb^{\Qb^u}(\cdot)$ by $v^{u}$ and $\Eb^{u}(\cdot)$, respectively. Then for every $t\in[0,T]$, the \textit{optimistic value} and \textit{conservative value} are defined as
\eqlnostar{eq:upper lower val}{v_t^+ := \underset{(u_t)\in \Uc}{\sup}\, v^{u},\quad\quad
v_t^- := \underset{(u_t)\in \Uc}{\inf}\, v^{u}.}
Since each choice of $(u_t)$ leads to a different EMM $\Qb^u$, one can naturally interpret $v^+$ as the valuation under an ``optimistic'' probability belief since we consider the supremum of the value functions $v^{u}$ over all the possible choices for the control process $u$, whereas $v^-$ can be seen as the valuation under a ``conservative'' probability belief since we instead consider the infimum of the $v^{u}$ over all the possibilities for $u$. These functions are based on the agent's beliefs and do not reflect the market values of the leases managed according to the corresponding harvesting strategies. The latter will be discussed in Section \ref{sec: Forest lease values under full parameter uncertainty}. The two value functions $v^+$ and $v^-$ and their corresponding optimal stopping strategies can be found by solving the appropriate RBSDEs as shown below.
\begin{theorem}\label{thm: robust rbsde}
Let the $\Fb$-progressively measurable triplets $\{(Y^+,Z^+,K^+), 0\leq t\leq T\}$ and $\{(Y^-,Z^-,K^-), 0\leq t\leq T\}$  taking values in $\Rb\times\Rb^2\times\Rb_+$ be unique solutions to the RBSDE \eqref{eq: rbsde Qu 1} with generator $f$ replaced with 
\eqlnostar{}{f^+\left(t,x,y,z\right) = \underset{u\in U}{\sup}\, f\left(t,x,y,z,u\right) = A_t + \underset{u\in U}{\sup}\, \left(-(r - \lambda^u) y + \alpha^{u,1} z^1 + \alpha^{u,2} z^2\right),} and \eqlnostar{}{f^-\left(t,x,y,z\right) = \underset{u\in U}{\inf}\, f\left(t,x,y,z,u\right) = A_t + \underset{u\in U}{\inf}\, \left(-(r - \lambda^u) y + \alpha^{u,1} z^1 + \alpha^{u,2} z^2\right),} respectively. Then $Y^+_t = v_t^+$ and $Y^-_t = v_t^-$ for every $t\in[0, T]$ $\Qb$-a.s.
\end{theorem}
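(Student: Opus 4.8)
The plan is to combine Lemma~\ref{lemma: Qu value} with a standard comparison/stability argument for reflected BSDEs, carried out separately for the optimistic and conservative cases (the two are symmetric, so I will describe the optimistic one). By Lemma~\ref{lemma: Qu value}(ii), for each admissible control $u\in\Uc$ the subjective value $v^u$ coincides $\Qb$-a.s.\ with the first component $Y^u$ of the solution of the RBSDE \eqref{eq: rbsde Qu 1} with obstacle $S$, terminal value $S_T$, and driver $f(s,X_s,y,z,u_s)$. Since $f^+(t,x,y,z)=\sup_{u\in U}f(t,x,y,z,u)$ dominates $f(t,x,y,z,u)$ pointwise for every fixed $u\in U$, the comparison theorem for RBSDEs (El Karoui et al.\ \citet{el1997reflected}, given that the obstacle $S$, terminal condition $S_T$ and the drivers are the same except for the pointwise inequality) yields $Y^+_t\ge Y^u_t=v^u_t$ for every $u\in\Uc$ and every $t$, $\Qb$-a.s.; taking the supremum over $u$ gives $Y^+_t\ge v^+_t$. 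For this step I must first check the well-posedness hypotheses: $f^+$ is Lipschitz in $(y,z)$ uniformly in $(t,x)$ because $U$ is a compact hypercube and, by \eqref{eq: alpha condition}, $\alpha(t,u)$ is affine in $\delta_t$ with coefficients bounded over $U$, while $\lambda^u$ is bounded; $S$ is continuous with the required integrability since $\zeta$ is bounded, $G$ deterministic and $P$ has moments; hence $Y^+$ exists and is unique.

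The reverse inequality $Y^+_t\le v^+_t$ is the crux and requires exhibiting a (measurable) control that (approximately) attains the supremum in the driver along the optimal trajectory. The key step is a \emph{measurable selection}: because $(y,z)\mapsto f(t,x,y,z,u)$ is continuous and $U$ is compact, for the pair $(Y^+,Z^+)$ one can select an $\Fb$-progressively measurable process $u^\star_t=u^\star(t,X_t,Y^+_t,Z^+_t)\in U$ with $f(t,X_t,Y^+_t,Z^+_t,u^\star_t)=f^+(t,X_t,Y^+_t,Z^+_t)$, $\dd t\times\dd\Qb$-a.s.\ (a Filippov-type selection theorem; alternatively one works with an $\varepsilon$-optimal selection and lets $\varepsilon\downarrow0$). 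With this $u^\star$, the triple $(Y^+,Z^+,K^+)$ solves the RBSDE \eqref{eq: rbsde Qu 1} with driver $f(\cdot,u^\star)$ — the obstacle, terminal value and the non-decreasing process are unchanged and the minimality condition $\int_0^T(Y^+_t-S_t)\dd K^+_t=0$ still holds — so by the uniqueness part of Lemma~\ref{lemma: Qu value}(ii), $Y^+_t=v^{u^\star}_t$. Since $u^\star\in\Uc$, $v^{u^\star}_t\le v^+_t$, giving $Y^+_t\le v^+_t$ and hence equality. The conservative case is identical with $\inf$ in place of $\sup$, using the comparison theorem in the other direction and selecting a minimizing control.

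Two technical points deserve care. First, the representation in Lemma~\ref{lemma: Qu value}(i)--(ii) is stated for a \emph{fixed} control $u$; to apply it to $u^\star$, which is itself defined through the solution $Y^+$, I must confirm that $u^\star$ is genuinely an admissible element of $\Uc$ — $\Fb$-progressive measurability follows from the selection theorem applied to the progressively measurable inputs $(X,Y^+,Z^+)$, and $\Lb^2$-integrability is automatic since $U$ is bounded. Second, I should note that $f^{\pm}$ may be nonsmooth (the supremum/infimum of affine functions is piecewise linear in $\delta_t$ and in $(y,z)$), but this is harmless: El Karoui et al.'s existence, uniqueness and comparison results only require Lipschitz continuity, which is preserved under pointwise $\sup$/$\inf$ over a compact set. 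I expect the measurable selection and the verification that the \emph{same} $(K^+)$ serves as the minimal pushing process for the selected driver to be the main obstacle; everything else is bookkeeping on top of Lemma~\ref{lemma: Qu value} and the cited RBSDE theory.
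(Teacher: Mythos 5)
Your proposal is correct, and the first half (domination $f^+\ge f(\cdot,u)$ plus the comparison theorem of \citet{el1997reflected} to get $Y^+_t\ge\sup_u v^u_t$) is exactly the paper's argument. For the reverse inequality, however, you take a genuinely different route. You use an \emph{exact} measurable argmax selection $u^\star(t,x,y,z)$ attaining $f^+=f(\cdot,u^\star)$, observe that $(Y^+,Z^+,K^+)$ then verbatim solves the RBSDE \eqref{eq: rbsde Qu 1} with driver $f(\cdot,u^\star)$ (same obstacle, same Skorokhod minimality condition), and conclude $Y^+=v^{u^\star}\le v^+$ by uniqueness; this is clean and your worry about whether ``the same $K^+$ serves'' is unfounded, since the equation and minimality condition are literally unchanged. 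The paper instead uses only an $\varepsilon$-optimal Borel selection $u^\varepsilon$ with $f^+-\varepsilon\le f(\cdot,u^\varepsilon)$ and then notes that $Y^+_t-\varepsilon(T-t)$ solves the RBSDE with generator $f^+-\varepsilon$ and obstacle $S_t-\varepsilon(T-t)$, so a second application of the comparison theorem gives $Y^+_t-\varepsilon(T-t)\le v^{u^\varepsilon}_t\le v^+_t$, and $\varepsilon\downarrow0$ finishes. The trade-off: your exact-selection argument requires the supremum over $U$ to be attained and an argmax selector to be measurable (both hold here, since $U$ is a compact hypercube and $u\mapsto(\alpha^{u},\lambda^u)$ is continuous, so Berge/Kuratowski--Ryll-Nardzewski applies), and in exchange it avoids any limiting argument; the paper's $\varepsilon$-selection route needs only the elementary Borel selection theorem and no attainment, at the cost of the small trick of shifting the solution and the obstacle by $\varepsilon(T-t)$. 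Both are valid; no gap.
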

In the above, we slightly abuse the notation and denote different subjective values of the ambiguous parameters $(\kappa^{\delta,u}, \mut^{\delta,u}, \lambda^{u})$ by corresponding values of the control process $u.$ As mentioned earlier, the exact relationships between the two are as in \eqref{eq: alpha condition} and $\psi = \frac{\lambda^u}{\lambda^{\Qb}}.$ It is well-known (see \citet[Appendix D]{karatzas1998methods}) that the optimal stopping time in the context of our RBSDE setup of Lemma \ref{lemma: Qu value} satisfies
\eqlnostar{}{
{\tau^u} := \inf\{t\geq 0: Y^u_t = S_t\}.
}
This means that it is optimal for the agent to harvest the forest land as soon as the subjective lease value $Y^u_t$ equals the instant harvesting revenue $S_t$. To obtain the corresponding optimal stopping strategies for the optimistic and conservative cases, we can solve the RBSDEs stated in Theorem \ref{thm: robust rbsde}. Thus, we have
\eqlnostar{eq:opt stop time}{
{\tau^+} := \inf\{t\geq 0: Y^+_t = S_t\}, && {\tau^-} := \inf\{t\geq 0: Y^-_t = S_t\}.
}
Based on the optimal stopping strategies $\tau^+$ and $\tau^-$, we compute the following two quantities
\eqlnostar{eq:lease value}{
\Eb^{\Qb}\left(B_{0,\tau^{\pm}}^{-1}\left(P_{\tau}G_{\tau^{\pm}} - K\right)\Gamma^{\lambda}_{0, \tau^{\pm}} + \int_{0}^{\tau^{\pm}} B_{0,u}^{-1}\Gamma^{\lambda}_{0,u} A_u\dd u\right).
} 
$\tau^{\pm}$ can be either $\tau^+$ or $\tau^-.$ The above are the ``true'' forest lease values obtained under the market measure when the harvesting strategies  ${\tau^+}$ and ${\tau^-}$ reflect subjective beliefs. We compare the above two lease value estimates with the valuation computed by directly solving the RBSDE \eqref{eq: rbsde Qu 1} corresponding to $u \equiv 0$, that is, without any model uncertainty. By looking at the three values - market valuation with an optimistic harvesting strategy, market valuation with a conservative harvesting strategy, and market valuation with no model uncertainty, we can quantify the impact of parameter uncertainty in our model and its effect on the forest lease valuation. 

\section{Model estimation}\label{sec: data summary}
We use data on weekly lumber futures prices and historical wildfire frequency for the estimation of parameters in the two-factor stochastic convenience yield model and the catastrophe jump intensity of Section \ref{sec: Q measure} via maximum likelihood estimation (MLE). We define the parameter uncertainty set $U$ as the Cartesian product of the 95\% confidence regions of the estimated parameters. The set $U$ also defines the space $\Uc$ of control processes $(u_t)$. 
We use the Kalman filter and MLE to estimate the parameters in the two-factor stochastic convenience yield model using lumber futures prices, as the lumber spot price and convenience yield are not directly available in the market. For the catastrophe jump intensity $\lambda^\Qb$, we assume that it is a constant parameter and that there is no jump risk premium, i.e. $\lambda^\Pb = \lambda^\Qb$. We then estimate this parameter using MLE.

\subsection{Estimation of the stochastic convenience yield model}\label{sec: estimate two-factor model}
We use weekly data on six of the most liquid lumber futures contracts with time-to-maturities of up to six months traded at the Chicago Mercantile Exchange (CME). Our data span the period between 8th September 1993 and 27th June 2022 (1520 weekly observations).
The risk-free interest rate $r = 0.0231$ is chosen as the average 6-month Treasury Bill rate for this period. We denote the six different futures contracts by F1 through F6 representing various average time-to-maturities (0.042 years to 0.459 years). Table \ref{table: futures data} provides summary statistics of prices and time-to-maturities for these futures. It can be observed that time-to-maturities are stable and remain within a narrow range for each contract throughout the sample period.
\begin{table}[H]
\centering
\begin{tabular}{ccc}
\hline
& Mean price & Mean time-to-maturity \\
Contract & (Standard deviation) & (Standard deviation) \\ \hline
F1 & 347.49 (173.43) & 0.042 (0.024) \\
F2 & 346.32 (164.74) & 0.125 (0.024) \\
F3 & 346.80 (156.51) & 0.209 (0.024) \\
F4 & 345.71 (145.11) & 0.292 (0.024) \\
F5 & 346.48 (139.86) & 0.376 (0.024) \\
F6 & 345.58 (130.66) & 0.459 (0.024) \\ \hline
\end{tabular}
\caption{{\footnotesize Statistics of futures contracts from September 1993 to June 2022. We use F1 as the futures closest to time-to-maturity and F6 furthest to time-to-maturity.}}
\label{table: futures data}
\end{table}

We estimate the model parameter vector $\Theta := \{\sigma^P,\sigma^{\delta},\kappa^{\delta},\mut^{\delta},\rho\}$ using the Kalman filter and MLE. As mentioned before, the lumber spot price $P$ and convenience yield $\delta$ are latent state variables. Following \citet{schwartz1997stochastic} and \citet{trolle2009unspanned}, we cast the two-factor model into state space form, which includes a transition equation and a measurement equation. For this purpose, we create an equidistant time grid $\pi=(t_i)_{i=0,1,...,N}$ with step size $\Delta t$, where $t_0 = 0$ and $t_N = T$. The transition equation here is essentially equivalent to the discretized version of the forward process in \eqref{eq: state vector 2} using the Euler scheme. The discretized forward process is given as
\eqlnostar{eq: discrete state vector}{X_{t_{i+1}}^\pi := \begin{pmatrix}\delta_{t_{i+1}}^\pi\\ P_{t_{i+1}}^\pi\end{pmatrix} = X_{t_{i}}^\pi + \begin{pmatrix} \kappa^{\delta}(\mut^{\delta} - \delta_{t_i}^\pi) \\ (r - \delta_{t_i})P_{t_i}^\pi\end{pmatrix}\Delta t + \begin{pmatrix} \sigma^{\delta} & 0\\ \rho\sigma^P P_{t_i}^\pi & \sqrt{1-\rho^2}\sigma^P P_{t_i}^\pi
\end{pmatrix} \begin{pmatrix}\Delta \Wt^1_{t_i}\\ \Delta \Wt^2_{t_i}\end{pmatrix},} where the $(i+1)$-th Brownian motion increment under measure $\Qb$ is denoted as $\Delta \Wt_{t_i} = \Wt_{t_{i+1}} - \Wt_{t_{i}}$ with $i = 0, \ldots, N-1.$ For the transition equation in the Kalman filter, let us define $\Xt_{t_i}^\pi := (\delta_{t_i}^\pi,p_{t_i}^\pi)^\top$ where $p_{t_i}^\pi := \log(P_{t_i}^\pi)$. Then, we have the following
\eqlnostar{eq: transition equation}{\Xt_{t_{i+1}}^\pi = \begin{pmatrix} \kappa^{\delta}\mut^{\delta} \\ r - \frac{1}{2}{\sigma^P}^2\end{pmatrix}\Delta t + \begin{pmatrix} 1 - \kappa^{\delta}\Delta t & 0\\ -\Delta t & 1
\end{pmatrix}\Xt_{t_i}^\pi + \eta_i,\, \text{where}\, \eta_i\sim \Nc (0,E),\,
E := \begin{pmatrix} {\sigma^{\delta}}^2 & \rho {\sigma^{\delta}}{\sigma^{P}}\\ \rho {\sigma^{\delta}}{\sigma^{P}} & {\sigma^{P}}^2
\end{pmatrix},}
and $\eta_i$ is serially uncorrelated. The measurement equation characterizes the relationship between state variables and logarithmic futures prices $F_{t_i} := (F_{t_i}^j)_{j=1,...,6}$, as we categorize them into six different contracts. Under the dynamics \eqref{eq: state vector 2}, the futures price $F(t,T)$ at time $t$ for a contract expiring at time $T$ is given as
\eqlnostar{eq: two factor pricing equation}{F(t,T) = & P_t\exp\left(-\delta_t\cdot\frac{1-\exp\left(-\kappa^{\delta}(T-t)\right)}{\kappa^{\delta}} + D(T-t)\right),\\
\label{eq: two factor pricing equation D}
D(T-t) := & \left(r - \mut^{\delta} + \frac{1}{2}\left(\frac{\sigma^{\delta}}{\kappa^{\delta}}\right)^2 - \frac{\sigma^P\sigma^{\delta}\rho}{\kappa^{\delta}}\right)(T-t) + \frac{1}{4}{\sigma^{\delta}}^2\frac{1 - \exp\left(-2\kappa^{\delta}(T-t)\right)}{{\kappa^{\delta}}^3} \\
& + \left(\mut^{\delta}\kappa^{\delta} + \sigma^P\sigma^{\delta}\rho - \frac{{\sigma^{\delta}}^2}{\kappa^{\delta}}\right)\frac{1 - \exp\left(-\kappa^{\delta}(T-t)\right)}{{\kappa^{\delta}}^2}.}
 Thus, the measurement equation writes
\eqlnostar{eq: measurement equation}{F_{t_i}^j = D\left(\Delta T^j\right) + \begin{pmatrix}\frac{-\left(1-\exp\left(-\kappa^{\delta}\Delta T^j\right)\right)}{\kappa^{\delta}}\\ 1\end{pmatrix}\cdot \Xt_{t_i}^\pi + \varepsilon_i^j,\, j=1,...,6,}
where $D(\cdot)$ is defined in \eqref{eq: two factor pricing equation D}, $\varepsilon_i := (\varepsilon_i^j)_{j=1,...,6} \sim \Nc(0,\text{diag}(d_1^2,...,d_6^2))$ is serially uncorrelated and independent of $\eta$, and $\Delta T^j, j=1,...,6$ are time-to-maturities of six different contracts. 

Now we can estimate $\Theta$ by maximizing the likelihood that $(F_{t_i})$ matches the market log-futures prices while simultaneously estimating the filtered unobservable spot price and convenience yield. The second column in Table \ref{table: two-factor model estimation} presents the estimation results for $\Theta$ and the standard deviation parameters appearing in the measurement equation \eqref{eq: measurement equation}. It can be observed that almost all parameters are highly significant, except for the long-run average convenience yield $\mut^{\delta}$. 
\citet{schwartz1997stochastic} also found that the long-run average convenience yields related to gold and oil in the two-factor and three-factor models are mostly insignificant.

To further examine the relevance of the parameter $\mut^{\delta}$, we manually set it to zero and re-estimate the two-factor model without it. The estimation results are included in the third column of Table \ref{table: two-factor model estimation}. 
We note that the remaining parameter estimates and the maximized log-likelihood are nearly identical in both models. This means that the parameter estimation is stable and not overly sensitive to the inclusion or exclusion of the insignificant parameter. Since we only consider uncertainty in drift parameters, we define the uncertainty set for $\kappa^\delta$ via the 95\% confidence interval $[\underline{\kappa},\overline{\kappa}] = [0.8183,1.0699]$. The unchanged log-likelihood suggests that the simplified model (with one fewer parameter) has essentially the same explanatory power as the original model. The value of $\mut^{\delta}$ being close to zero also indicates that it has a negligible effect on the likelihood function and does not contribute meaningfully to the model. However, simply removing this parameter from the original two-factor model could have a substantial effect on the valuation of forestry investments and, consequently, the optimal harvesting decision of a forest lease. To handle this situation, we specify the uncertainty set for $\mut^{\delta}$ to be the range of values such that when we fix $\mut^{\delta}$ equal to any value in that range, the estimates for other parameters in the $\Theta$ vector remain stable. 
This process 
gives us an uncertainty set of $\mut^{\delta}$,  $[\underline{\mu},\overline{\mu}] = [-0.1020,0.0090]$.

\begin{table}[H]
\centering
\begin{tabular}{lcc}
\hline
Parameter              & Estimates with $\widetilde{\mu}^\delta$ & Estimates without $\widetilde{\mu}^\delta$ \\ \hline
$\sigma^P$             & 0.3304 (0.0061)***                      & 0.3304 (0.0060)***                         \\
$\sigma^\delta$        & 0.4640 (0.0131)***                      & 0.4640 (0.0131)***                         \\
$\kappa^\delta$        & 0.9441 (0.0642)***                      & 0.9441 (0.0641)***                         \\
$\widetilde\mu^\delta$ & 0.0000 (0.0206)                         & -                                          \\
$\rho$                 & 0.7061 (0.0156)***                      & 0.7061 (0.0153)***                         \\
$d_1$                  & 0.0364 (0.0005)***                      & 0.0364 (0.0005)***                         \\
$d_2$                  & 0.0150 (0.0003)***                      & 0.0150 (0.0003)***                         \\
$d_3$                  & 0.0238 (0.0005)***                      & 0.0238 (0.0004)***                         \\
$d_4$                  & 0.0137 (0.0003)***                      & 0.0137 (0.0003)***                         \\
$d_5$                  & 0.0224 (0.0004)***                      & 0.0224 (0.0004)***                         \\
$d_6$                  & 0.0083 (0.0003)***                      & 0.0083 (0.0003)***                         \\
Log-likelihood         & 20230.2                                 & 20230.2                                    \\ \hline
\end{tabular}
\caption{{\footnotesize Parameter estimates for the two-factor stochastic convenience yield model, with standard errors in parentheses. $[***]$ stands for statistical significance at 1\% level.}}
\label{table: two-factor model estimation}
\end{table}

\subsection{Estimation of the catastrophe jump intensity}\label{sec: estimation of catastrophe jump intensity}
We also estimate the jump intensity $\lambda^\Qb$ using MLE. We model it as the intensity of a Poisson process determining the number of monthly jumps of the step process $(N_t)$, which announces the arrival of catastrophes. The Federal Emergency Management Agency (FEMA)\footnote{\url{https://www.fema.gov/data-visualization/disaster-declarations-states-and-counties}} in the USA provides county-wise overall loss assessment of catastrophes, including severe local storms, wildfires, flooding, tornadoes, and winter weather.
We chose the data from Oregon as it is one of the largest forestry industrial regions in the USA and assume that the forestry investment is based in Douglas County, one of the top lumber-producing counties. Since wildfires are generally considered the most hazardous disaster for the forest, we obtained relevant wildfire data for Douglas County from FEMA, ranging from May 1953 to May 2024. 
The estimated yearly jump intensity $\lambda^\Qb$ is 0.2392 with a standard error of 0.0580, which is highly significant. Therefore, we take the 95\% confidence interval of $\lambda^\Qb$ as its uncertainty set, i.e. $[\underline{\lambda},\overline{\lambda}] = [0.1255,0.3528]$. 

After the estimation exercise, we have all the parameters $\Theta$ under the $\Qb$ measure, as shown in the second column of Table \ref{table: two-factor model estimation}. Moreover, the uncertainty set is given as $U = [\underline{\kappa},\overline{\kappa}]\times[\underline{\mu},\overline{\mu}]\times[\underline{\lambda},\overline{\lambda}]$ as $[0.8183,1.0699]\times[-0.1020,0.0090]\times[0.1255,0.3528]$. These estimates are used in the numerical experiments, the results of which are presented in the following section.

\section{Numerical experiments}\label{sec: numerical results}
As shown in Theorem \ref{thm: robust rbsde}, the optimistic and conservative values of the forestry investment are solutions to their corresponding RBSDEs.
Since these RBSDEs do not permit solutions in analytical forms, we evaluate the forestry investment and optimal harvesting strategy by solving the RBSDEs numerically. We adapt the Stratified Regression One-step Forward Dynamic Programming (SRODP) algorithm of \citet{agarwal2023monte} for solving RBSDEs for our purpose of forest lease valuation. The SRODP method numerically solves RBSDEs of the type in \eqref{eq: rbsde Qu 1} by using the classical sampling technique of stratified sampling. For more details on stratified sampling and its use in numerical methods for solving BSDEs we refer to \citet{gobet2016stratified}. Here, we provide the main idea of the numerical procedure and include details on the modification of the SRODP method in Appendix \ref{appendix: algo}.

Let us consider the same time discretization $\pi$ as used in Section \ref{sec: estimate two-factor model} for model estimation, and the discretized forward process $X$ defined as in \eqref{eq: discrete state vector}. Our aim then is to first compute the optimal stopping strategies defined in \eqref{eq:opt stop time} by numerically solving the RBSDEs stated in Theorem \ref{thm: robust rbsde}. For the time discretization $\pi,$ we consider discretized versions of the RBSDEs 
\eqlnostar{eq:EulerDis1}{Y_{t_i}^{\pi} = Y_{t_{i+1}}^{\pi} + f^\pm(t_i, {X_{t_i}^{\pi}}, Y_{t_i}^{\pi}, Z_{t_i}^{\pi})\Delta t + \Delta K_{t_i}^{\pi} - {Z_{t_i}^{\pi}}\cdot \Delta \Wt_{t_i},}
where $f^\pm$ can be either $f^+$ or $f^-$. By taking a conditional expectation in \eqref{eq:EulerDis1} without taking the term $\Delta K_{t_i}$ into account, we obtain the following
\eqlnostar{EulerDis3}{
\Yt_{{t_i}}^{\pi} = \Eb_{t_i}^{\Qb}(Y_{t_{i+1}}^{\pi} + f^{\pm}(X_{t_{i}}^{\pi},\Yt_{t_{i}}^{\pi},Z_{t_{i}}^{\pi})\Delta {t}),
}
where we use the short-hand notation $\Eb_{t_i}^{\Qb}(\cdot)$ for $\Eb^{\Qb}({\cdot|\Fc_{t_i}})$. 
This defines the expected value $\Yt^{\pi}$ (also known as the \textit{continuation value}) of the BSDE before reflection. Next, we consider an approximation \eqref{eqSRODP2} replacing $\Yt_{t_{i}}^{\pi}$ with $\Yt_{t_{i+1}}^{\pi}$ in the argument of $f$ in \eqref{EulerDis3} for simplicity.
{$K_{t_i}^{\pi}$ pushes $Y_{{t_i}}^{\pi}$ above the obstacle $S^\pi_{t_i}:= P_{t_i}^{\pi}G_{t_i}-K.$ Therefore, by taking the maximum in \eqref{EulerDis3} between the reflection bound $S^\pi_{t_i}$ and the expected value of the BSDE before reflection $\Yt_{{t_i}}^{\pi}$, leads to the approximation of $Y_{t_i}^{\pi}$ in \eqref{eqSRODP3}.} We thus get the following recursive procedure to solve for $Y^{\pi},$
\eqlnostar{eqSRODP1}{
    & Y_{t_N}^{\pi} = P^{\pi}_{t_N}G_{t_N}-K, \: Z_{t_{i}}^{\pi} = \frac{1}{\Delta t}\Eb_{t_i}^{\Qb}(Y_{t_{i+1}}^{\pi}\Delta \Wt_{t_i}), \\
    \label{eqSRODP2}
    & \Yt_{{t_i}}^{\pi} = \Eb_{t_i}^{\Qb}(Y_{t_{i+1}}^{\pi} + f^{\pm}(X_{t_{i}}^{\pi},Y_{t_{i+1}}^{\pi},Z_{t_{i}}^{\pi})\Delta {t}), \\
    \label{eqSRODP3}
    & {Y}_{t_{i}}^{\pi} = \Yt_{t_{i}}^{\pi}\vee S_{t_i}.
}
Note that $\Delta K_{t_i}$ in \eqref{eq:EulerDis1} does not disappear when computing $Y_{{t_i}}^{\pi}$, but instead is implicitly present in \eqref{eqSRODP3}. The SRODP algorithm solves the above system \eqref{eqSRODP1}-\eqref{eqSRODP3} by accurately approximating the conditional expectation in \eqref{eqSRODP1}-\eqref{eqSRODP2}. Once we have obtained the numerical approximate solution $Y^{\pi}$ of the RBSDE, we can find the approximate optimal stopping strategy. We define the approximate optimal stopping (harvesting) time-based on optimistic and conservative valuations as
\eqlnostar{eq:est opt stop time}{
\tauh^+ := \min\{t_i, i=0,...,N: Y^{+,\pi}_{t_i} = S^\pi_{t_i}\}, && \tauh^- := \min\{t_i, i=0,...,N: Y^{-,\pi}_{t_i} = S^\pi_{t_i}\}.
}
The approximate optimal harvesting time in the case of no model uncertainty can also be defined analogously as follows
\eqlnostar{eq:est opt stop time nounc}{
\tauh := \min\{t_i, i=0,...,N: Y^{\pi}_{t_i} = S^\pi_{t_i}\}.
}
Once we have estimated the three different optimal stopping strategies, we estimate the forest lease value under the market measure $\Qb$ by computing the following quantity
\eqlnostar{eq:lease value est}{
\Eb^{\Qb}\left(B_{0,\tau}^{-1}\left(P_{\tau}G_{\tau} - K\right)\Gamma^{\lambda}_{0, \tau} + \int_{0}^{\tau} B_{0,u}^{-1}\Gamma^{\lambda}_{0,u} A_u\dd u\right),
} 
for $\tau = \tau^+, \tau^-, \tauh$ using the Monte Carlo estimator. All numerical experiments have been implemented in Python on a personal computing device with a 5 GHz Intel Core i9-12900HX processor, 32 GB RAM and 12GB GeForce NVIDIA RTX A3000 GPU.

\subsection{Robust optimal harvesting strategies}\label{sec: robust optimal harvesting strategy}
\subsubsection{With full parameter uncertainty}\label{sec: harvesting strategy with full parameter uncertainty}
In this section, we consider full parameter uncertainty, meaning that the three risk-neutral parameters $(\kappa^{\delta}, \mut^{\delta}, \lambda^{\Qb})$ are replaced by $(\kappa^{\delta,u}, \mut^{\delta,u}, \lambda^{u})$, as defined in Section \ref{sec: uncertainty set}, and may take values within the set $U$. We first fix realistic parameterizations for the timber growth function $G$; harvesting cost $K$, and amenity value $A$. Since the underlying commodity for lumber futures on the CME is softwood lumber, such as pine and fir, we direct our attention to pine wood. Following \citet{insley2002real}, we use the logistic growth curve for merchantable white pine wood volume and assume parameter estimates as obtained by \citet{rollins1995financial}. The wood volume growth function $G$ (in cubic meters per hectare) is assumed to have the form
\eqlnostar{}{G_t = 792 - 5313 t^{-0.5},\, t\in[50,103].}
We assume that $G$ is 0 before the age of 50 years and becomes constant after the age of 103 years. Since wood volume is 0 before age 50, we also assume that a wildfire cannot affect the forest resources before that time. The harvesting cost $K = \$ 127.74$/1000 board feet is computed by subtracting the 2018 average stumpage price from the delivered price (including all costs associated with harvesting and transporting the timber) of pine sawtimber and is retrieved from \citet{bardon2023}. We assume that the amenity value $A$ is a constant with no management cost and that amenity solely comes from the annual hunting lease rate for pine stands. From the estimate in \citet{godar2022forest}, we use $A\equiv \$8$/hectare. 
In our numerical procedure explained in the previous section, we assess the subjective lease valuation of a one-hectare pine forest, with all calculations adjusted for different units so that the lease value is expressed in dollars per hectare. Following the approach in \citet{insley2002real}, we select similar parameters for the time horizon and discretization. Specifically, we set the maximum time horizon for harvesting to $T = 150$ years and choose $N = 3000$ steps, yielding a finer increment of $\Delta t = 0.05$ (approximately 18 days or half a month).
Thus, in our framework, the forest harvest can be decided every $\Delta t$ period. The initial lumber spot price is set at $P_0= \$ 600$ /per 1000 board feet, approximately matching the June 2022 lumber futures price with the nearest maturity.
The initial convenience yield is $\delta_0=-0.01$, chosen to be close to the estimated long-run mean $\widetilde{\mu}^\delta$ of the convenience yield. 
In the SRODP algorithm, we fix the space domain of the logarithmic lumber spot price $p$ to be $[-2.5,8.5]$, and that of the convenience yield $\delta$ to be $[-2,2]$. Furthermore, we consider $80$ hypercubes per dimension and 1000 simulated paths per hypercube, and conduct basis function regression within each hypercube using local polynomial functions of order one. These hyperparameters have already been fine-tuned for maximum accuracy and efficient performance, as discussed in \citet{agarwal2023monte}.


We estimate the optimal harvesting time under the optimistic, conservative, and no uncertainty scenarios as defined in \eqref{eq:est opt stop time} and \eqref{eq:est opt stop time nounc} by averaging the estimated stopping times for $10^5$ re-simulated paths and present the results in Table \ref{table: approximate optimal harvesting time}. 
Our estimated results align well with the literature, as the recorded average harvesting age for white pines (sawtimber) is 40 - 80 years (\citet{wittwer2004even}).
As expected, the agent would harvest earlier in the conservative case while postponing it in the optimistic case. Figure \ref{fig: sample path} (a) displays a representative sample path of the simulated forward process $X^\pi$ and the corresponding approximate optimal harvesting time ($\tauh^*=54.4$ years) with no uncertainty, shown in blue.
\begin{table}[H]
\centering
\begin{tabular}{cccc}
\hline
Case                    & Conservative & No uncertainty & Optimistic \\ \hline
Optimal harvesting time & 51.90        & 52.95          & 56.23      \\ \hline
\end{tabular}
\caption{{\footnotesize Approximate optimal harvesting time (in years).}}
\label{table: approximate optimal harvesting time}
\end{table}
As discussed in Section \ref{sec: uncertainty set}, each choice of uncertain parameters leads to a different risk-neutral probability measure, representing the agent's belief. Each probability measure reflecting the agent's beliefs results in a different optimal harvesting time. Therefore, we plot all the possible harvesting times as an interval for this specific sample path, which we refer to as the ``uncertainty corridor'' in Figure \ref{fig: sample path} (a). Figure \ref{fig: sample path} (b) shows the corresponding subjective forest lease value processes for the same path under three different scenarios: the conservative case ($v^-$), no uncertainty ($v$), the optimistic case ($v^+$) and the instant harvesting revenue. The subjective lease value process for the optimistic case ($v^+$) is always higher than or equal to that for the no uncertainty case ($v$), which in turn is always higher than or equal to the value process for the conservative case ($v^-$). Note, this reflects the subjective beliefs and values, and not market values, which we will consider later.  In all cases, the lease value processes are always higher than or equal to the instant harvesting revenue. The first time the lease value process touches the instant harvesting revenue is the optimal harvesting time. The corresponding approximate optimal harvesting time for each case is also highlighted: 53.35 years in the conservative case and 55.35 years in the optimistic case.
\begin{figure}[H]
  \centering
  \begin{tabular}{cc}
    \includegraphics[scale=0.55]{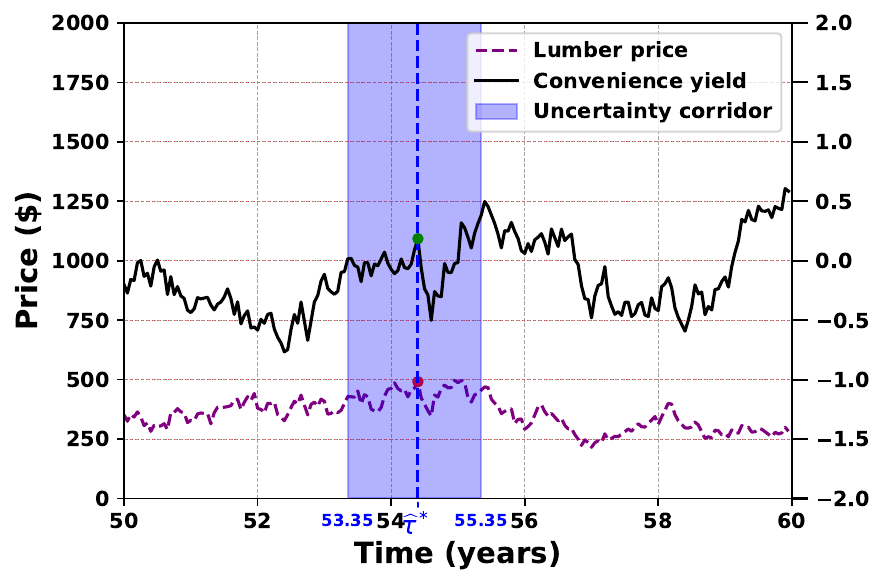} &
    \includegraphics[scale=0.55]{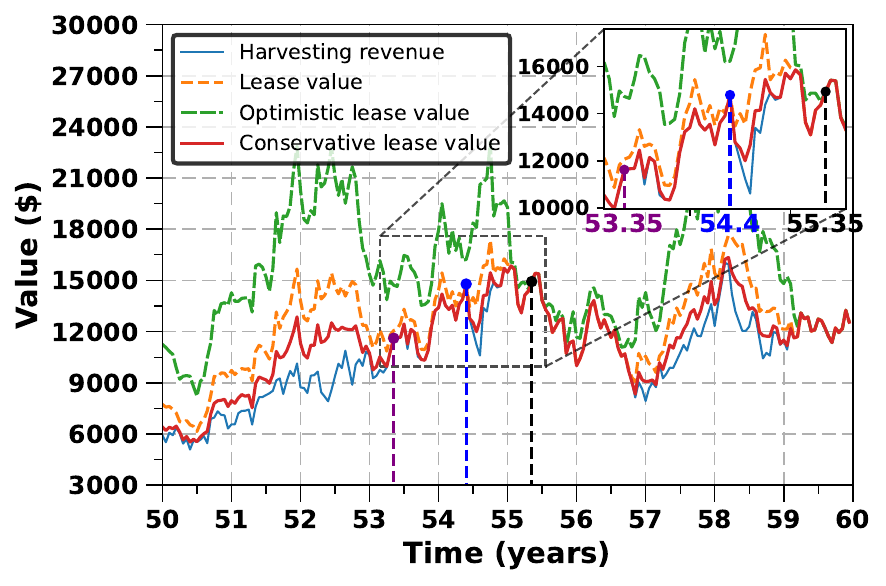}\\
    \footnotesize (a) Estimated optimal harvesting time &
    \footnotesize (b) Estimated forest lease value processes \\
  \end{tabular}
  \caption{{\footnotesize (a) The approximate optimal harvesting time of a sample path of $X$, and (b) the corresponding estimated forest lease value processes from 50 to 60 years.}}
  \label{fig: sample path}
\end{figure}

Since the SRODP algorithm locally approximates the continuation value as functions of $X^\pi_{t_i}$ at time $t_i$ in each hypercube $\Hc_j$ rather than globally, we can utilize the stored basis function regression coefficients for each $\Hc_j$ to efficiently recover the lease value at time $t_i$ as a function of the state variables $X^\pi_{t_i}$ by taking the maximum of the continuation value and the instant harvesting revenue. Figure \ref{fig: optimal stopping boundary 2} is a plot of both the instant harvesting revenue and lease value at age $70$ as a function of the lumber spot price and convenience yield in both conservative and optimistic cases. The overlapping regions of both surfaces, representing the stopping regions, are highlighted in black. We observe that the subjective lease value in the optimistic case ($v^+$) is slightly higher, while the overlapping region is slightly smaller than in the conservative case. This suggests that harvesting is more likely under the conservative scenario. Moreover, for a given lumber spot price, a lower convenience yield results in a higher forest lease value. Intuitively, a lower convenience yield implies that leaving the trees ``on the stump'' is more advantageous than holding harvested lumber inventories, as the storage costs outweigh the benefits of holding lumber. A similar observation was made by \citet{chen2012impact} in the case of no parameter uncertainty.
\begin{figure}[H]
  \centering
  \begin{tabular}{cc}
    \includegraphics[scale=0.52]{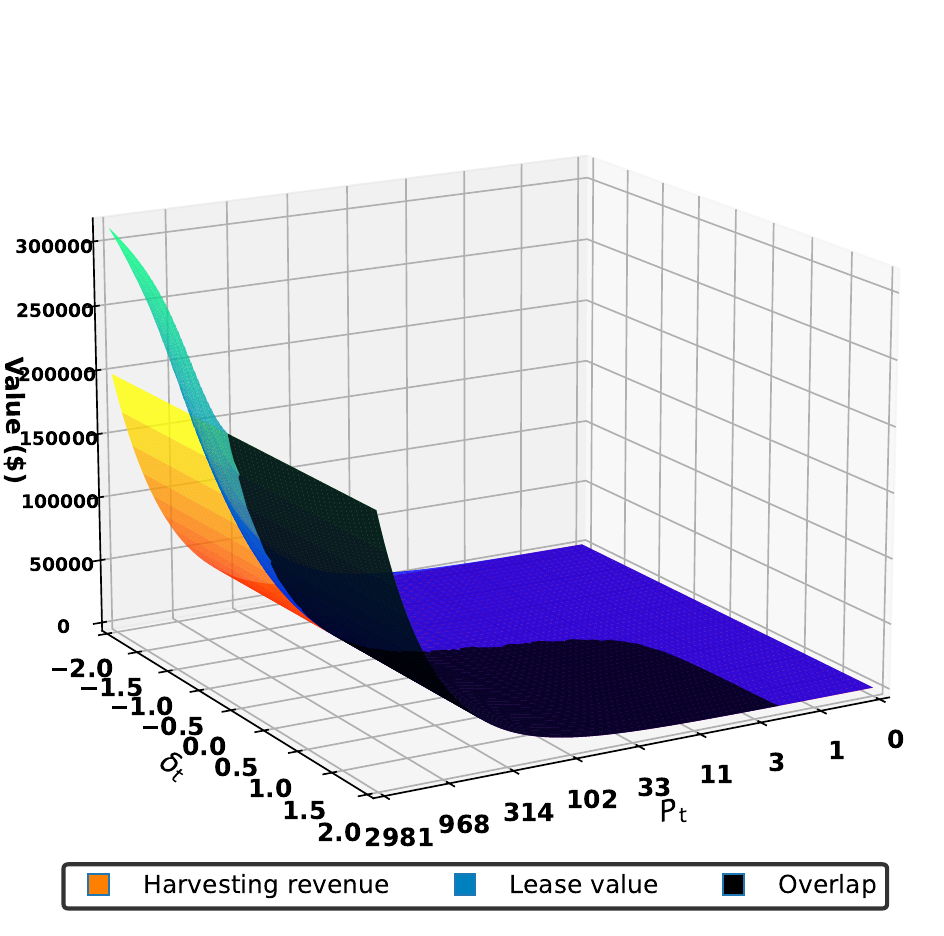} &
    \includegraphics[scale=0.52]{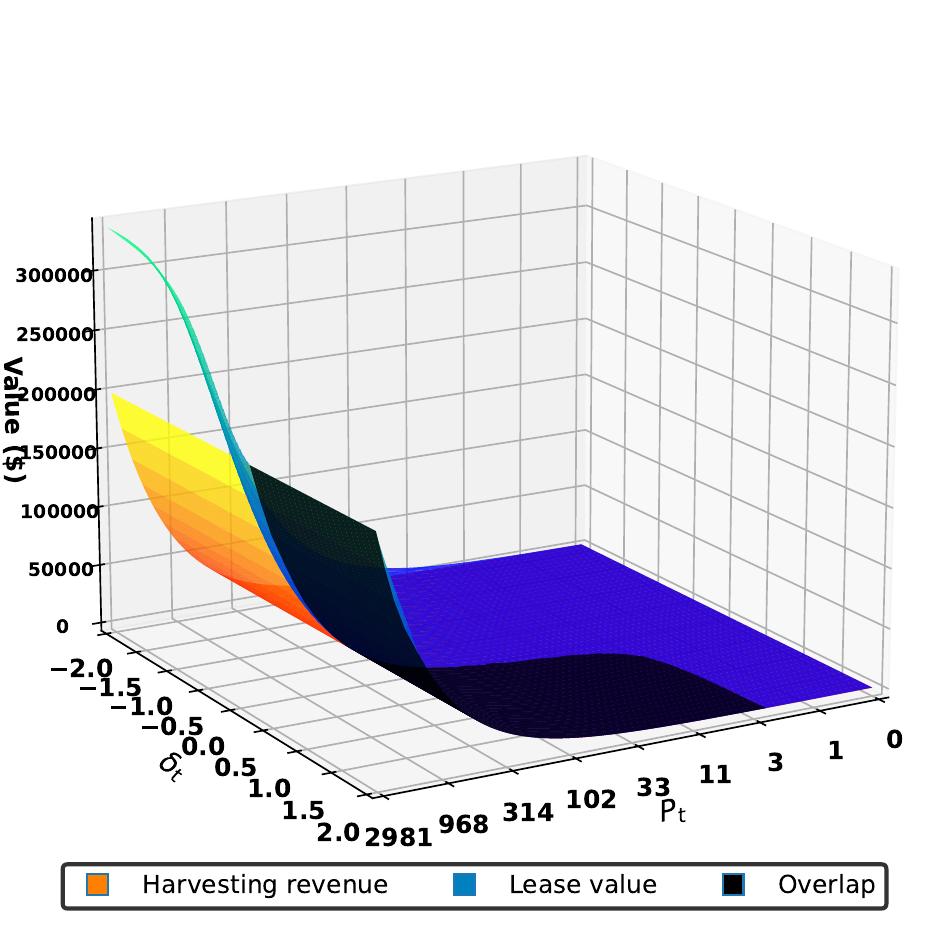}\\
    \footnotesize (a) Conservative case &
    \footnotesize (b) Optimistic case \\
  \end{tabular}
  \caption{{\footnotesize The approximate forest lease value and instant harvesting revenue as functions of the state variables at age 70.}}
  \label{fig: optimal stopping boundary 2}
\end{figure} 

To better understand the relationship between the optimal stopping boundary and the state variables, we project the instant harvesting revenue and subjective lease value onto the $P_t$-$\delta_t$ plane at various times across different scenarios, as shown in Figure \ref{fig: optimal stopping boundary 1}. The stopping boundary is a function of time, the lumber spot price, and the convenience yield. In all three cases, the stopping region expands as the time progresses toward the maturity date, suggesting an increasing possibility of early stopping or harvesting. Notably, the convex shape of stopping regions here resembles that of American type options, whose payoff depends on the product of two asset prices (for a detailed study, see \citet{broadie1997valuation}). 

More specifically, the stopping boundaries under all three scenarios in Figure \ref{fig: optimal stopping boundary 1} exhibit a similar pattern - the stopping region expands along the negative x- and y-axes as time progresses, indicating that lower lumber spot prices and convenience yields can trigger early harvesting. Neither the lumber spot price nor the convenience yield alone is a sufficient statistic for determining whether instant harvesting is optimal. However, for a given lumber spot price, a higher convenience yield is more likely to trigger early harvesting. The logic here is that a higher convenience yield reflects greater benefits of holding lumber inventories relative to storage costs, making it more beneficial for the agent to harvest the forest and store the lumber rather than leaving the trees ``on the stump.'' Furthermore, given that lumber spot prices typically range in the hundreds to over a thousand dollars per 1000 board feet (cf. average futures prices in Table \ref{table: futures data}), the convenience yield becomes determinant in the optimal harvesting policy. Thus, when lumber spot prices are moderate, it may be optimal to harvest even if storage costs slightly exceed the benefits of holding lumber inventories, corresponding to a slightly negative convenience yield.

It can be further observed that a conservative probability belief accelerates the optimal harvesting time, while an optimistic belief delays it. In particular, with a moderate lumber spot price, it could be optimal under the conservative scenario to harvest at or after age 70, even if the storage cost slightly outweighs the benefit of holding lumber inventories. Conversely, in the optimistic scenario, harvesting is delayed, becoming optimal only if the holding benefits are significantly higher than the storage costs, corresponding to a positive convenience yield.

\begin{figure}[H]
  \centering
\begin{tabular}{>{\centering\arraybackslash}m{1.5cm}>{\centering\arraybackslash}m{4.5cm}>{\centering\arraybackslash}m{4.5cm}>{\centering\arraybackslash}m{4.5cm}}
        & Conservative case & No uncertainty & Optimistic case \\
        \parbox{1.5cm}{\centering $t=50$} & \includegraphics[width=0.28\textwidth]{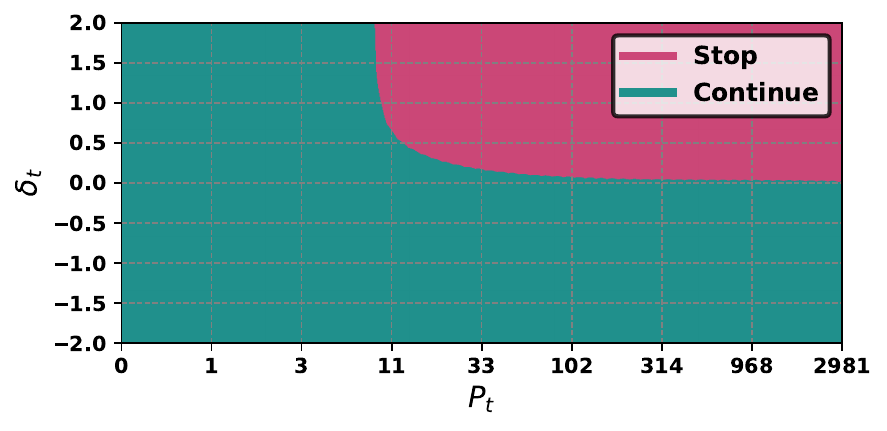} & \includegraphics[width=0.28\textwidth]{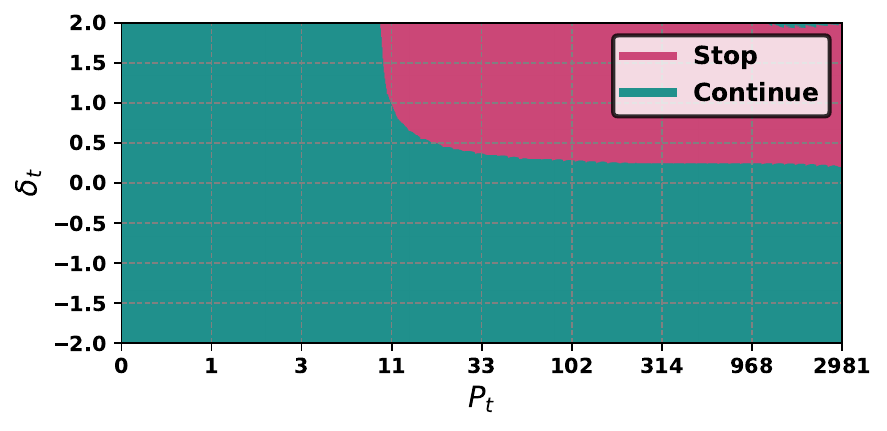} & \includegraphics[width=0.28\textwidth]{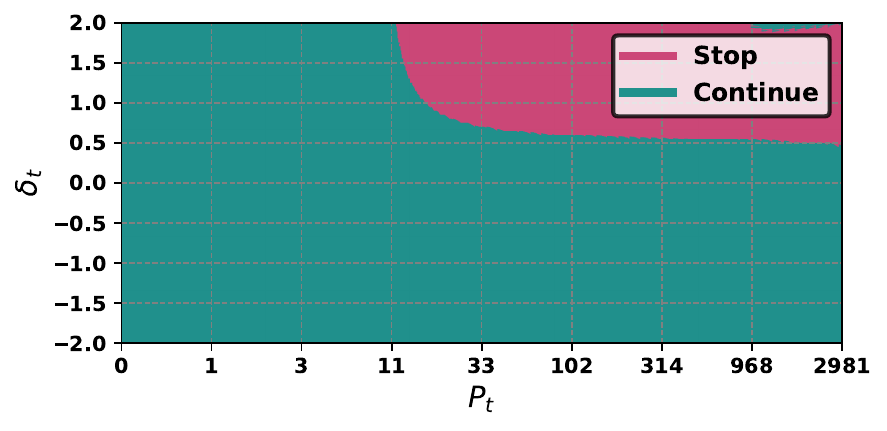} \\
        \parbox{1.5cm}{\centering $t=70$} & \includegraphics[width=0.28\textwidth]{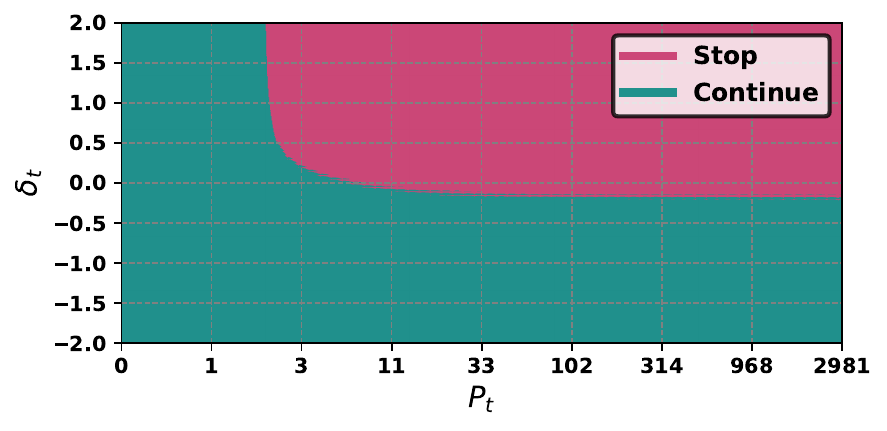} & \includegraphics[width=0.28\textwidth]{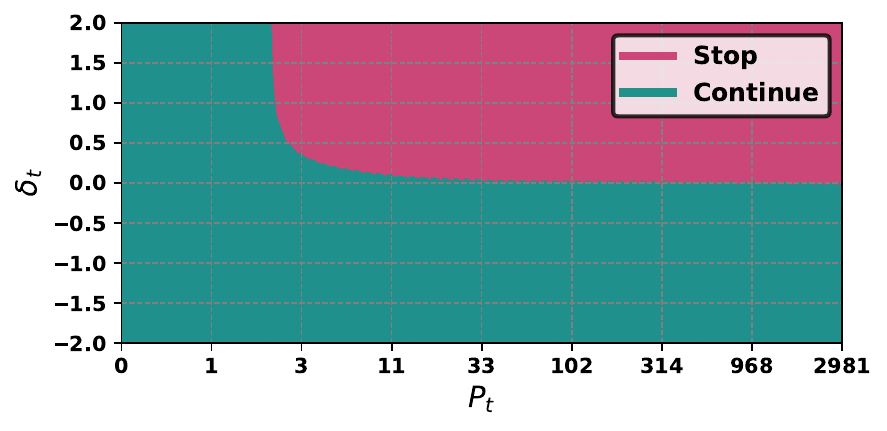} & \includegraphics[width=0.28\textwidth]{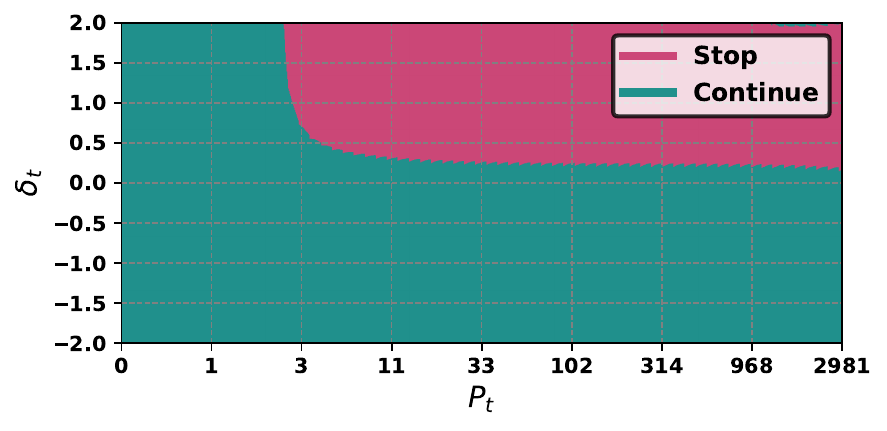} \\
        \parbox{1.5cm}{\centering $t=100$} & \includegraphics[width=0.28\textwidth]{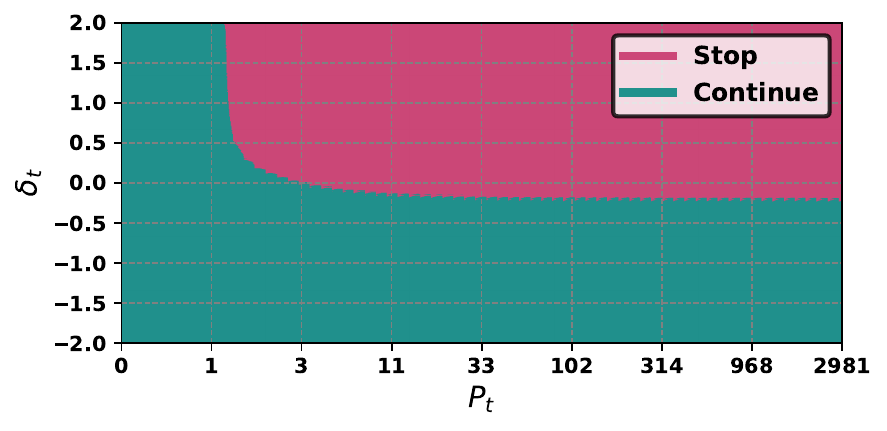} & \includegraphics[width=0.28\textwidth]{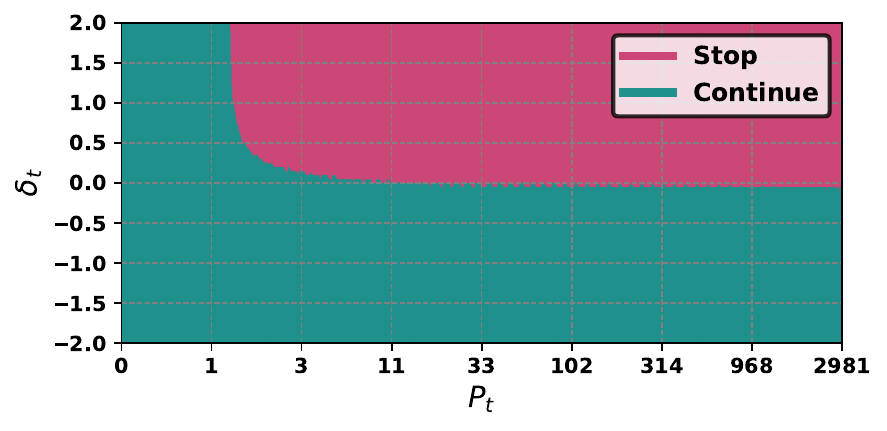} & \includegraphics[width=0.28\textwidth]{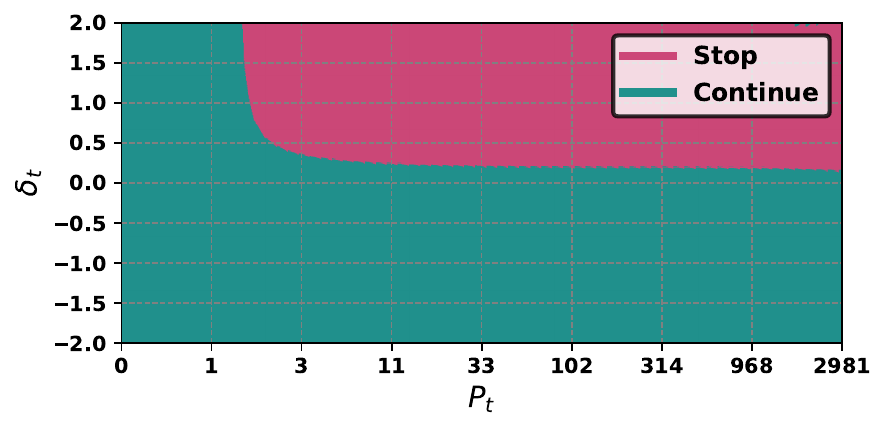} \\
\end{tabular}
\caption{{\footnotesize Stopping and continuation regions for the forestry investment.}}
\label{fig: optimal stopping boundary 1}
\end{figure}

\subsubsection{The case of pure catastrophe jump intensity uncertainty}\label{sec: with only catastrophe jump intensity uncertainty}
There is a general consensus that the frequency of catastrophe arrivals has increased over the last decades due to global warming and climate change. The limited availability of related data underscores the importance of closely examining uncertainty in the catastrophe jump intensity. Abstracting from other issues, considering the scenario where uncertainty exists solely in the catastrophe jump intensity might be of particular interest. We re-estimate the optimal harvesting time by averaging the estimated stopping times for $10^5$ re-simulated paths, with results presented in Table \ref{table: approximate optimal harvesting time only cat uncertainty}. It can be observed that, in the conservative case, optimal harvesting is slightly delayed, while in the optimistic case, it is slightly accelerated compared to the results in Table \ref{table: approximate optimal harvesting time}. This is intuitive, as the uncertainty set here is smaller, resulting in the minimum and maximum values of $v^u$ being higher and lower than those in Section \ref{sec: harvesting strategy with full parameter uncertainty}, respectively. The stopping and continuation region plots are nearly identical to those in Figure \ref{fig: optimal stopping boundary 1} and are therefore not presented here. 
\begin{table}[H]
\centering
\begin{tabular}{ccc}
\hline
Case                    & Conservative & Optimistic \\ \hline
Optimal harvesting time & 51.94                & 55.86      \\ \hline
\end{tabular}
\caption{{\footnotesize Approximate optimal harvesting time (in years) with only catastrophe jump intensity uncertainty.}}
\label{table: approximate optimal harvesting time only cat uncertainty}
\end{table}

\subsubsection{With carbon sequestration value}\label{sec: with carbon sequestration value}
Since living trees absorb carbon dioxide (CO2) from the atmosphere, delaying harvesting forest lands can help offset greenhouse gas emissions. Moreover, many corporations are seeking to purchase carbon credits in carbon markets to improve their environmental, social, and corporate governance (ESG) ratings. On the Voluntary Carbon Market in the USA, the average value of delaying harvest on private forest land for one year is around \$ 16 per acre (\citet{kreye2023}). We refer to this value of delaying harvest as the carbon sequestration value. By incorporating the carbon sequestration value into the amenity value ($A = 47.54$ \$/hectare), we re-estimate the optimal harvesting time by repeating the procedure used for the full parameter uncertainty case in Section \ref{sec: harvesting strategy with full parameter uncertainty}. The results presented in Table \ref{table: approximate optimal harvesting time with carbon sequestration value} indicate that including the carbon sequestration value delays the approximate optimal harvesting time by about 340 days, 468 days and 654 days under conservative, no uncertainty and optimistic beliefs, respectively. This result is expected thanks to the benefit of delaying harvest brought by the carbon sequestration value. The effect of delaying harvest is further demonstrated by the shrinking stopping regions observed in Figure \ref{fig: optimal stopping boundary 1 carbon} when compared with Figure \ref{fig: optimal stopping boundary 1}.
\begin{table}[H]
\centering
\begin{tabular}{cccc}
\hline
Case                    & Conservative & No uncertainty & Optimistic \\ \hline
Optimal harvesting time & 52.83        & 54.23          & 58.02      \\ \hline
\end{tabular}
\caption{{\footnotesize Approximate optimal harvesting time (in years) when including the value of delaying harvest.}}
\label{table: approximate optimal harvesting time with carbon sequestration value}
\end{table}

\begin{figure}[H]
  \centering
\begin{tabular}{>{\centering\arraybackslash}m{1.5cm}>{\centering\arraybackslash}m{4.5cm}>{\centering\arraybackslash}m{4.5cm}>{\centering\arraybackslash}m{4.5cm}}
        & Conservative case & No uncertainty & Optimistic case \\
        \parbox{1.5cm}{\centering $t=50$} & \includegraphics[width=0.28\textwidth]{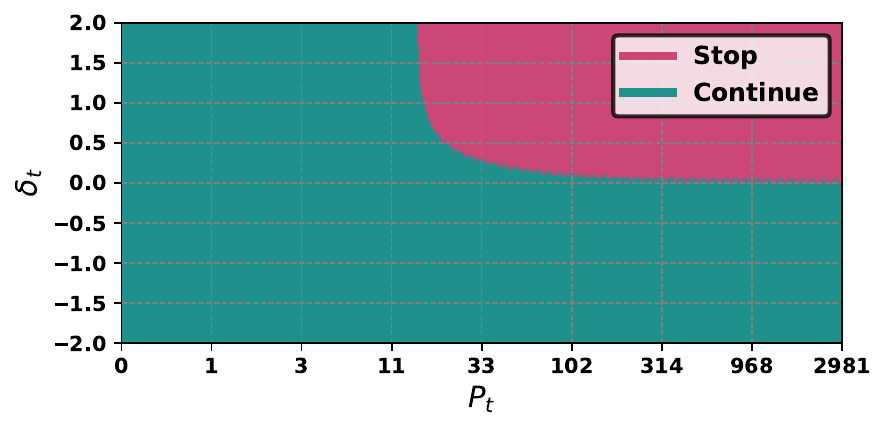} & \includegraphics[width=0.28\textwidth]{plots/stopping_boundary_1_t=50_no_amb.pdf} & \includegraphics[width=0.28\textwidth]{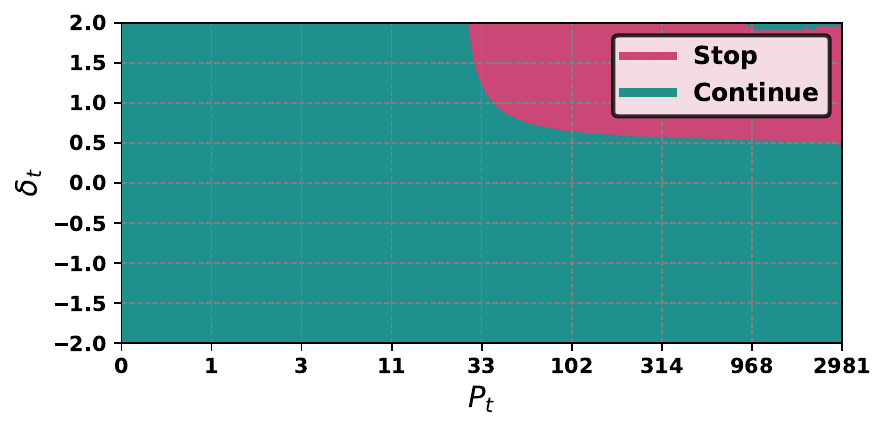} \\
        \parbox{1.5cm}{\centering $t=70$} & \includegraphics[width=0.28\textwidth]{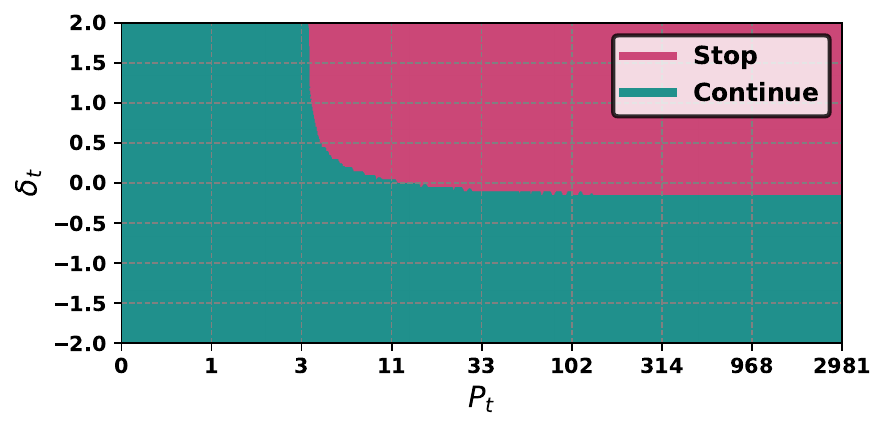} & \includegraphics[width=0.28\textwidth]{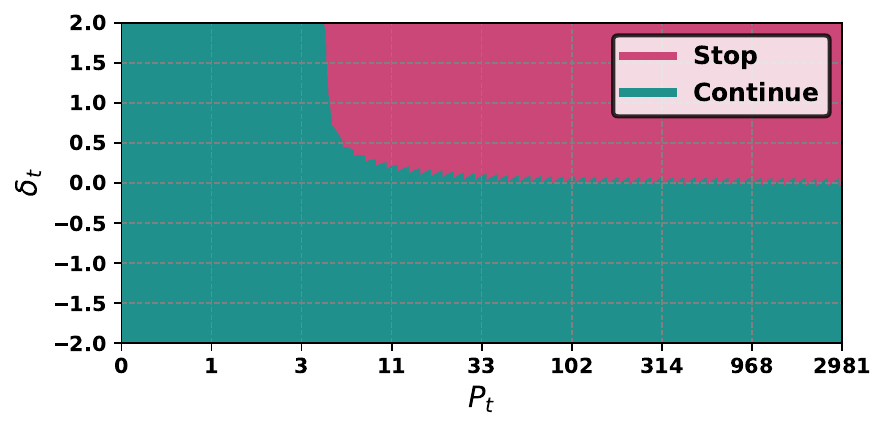} & \includegraphics[width=0.28\textwidth]{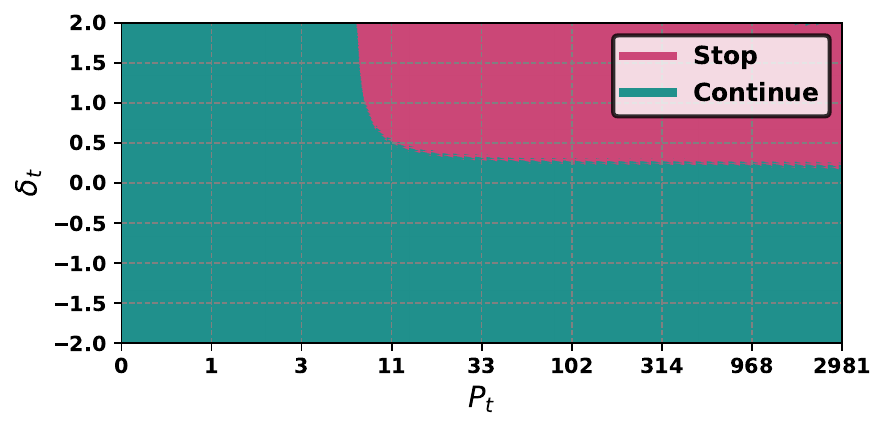} \\
        \parbox{1.5cm}{\centering $t=100$} & \includegraphics[width=0.28\textwidth]{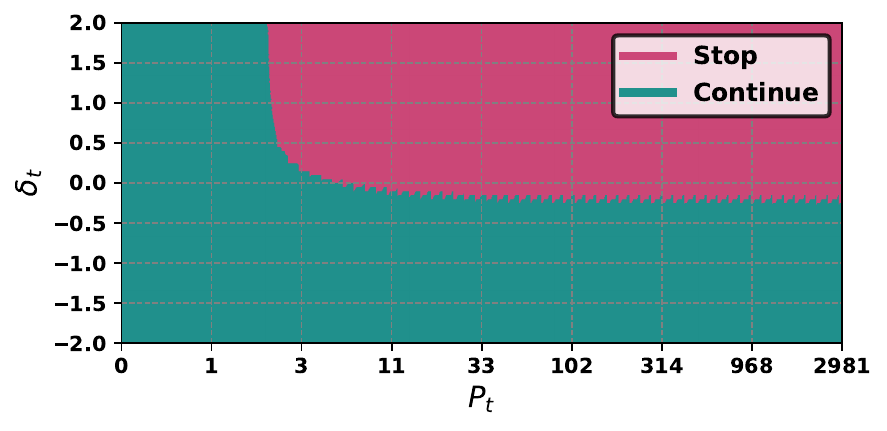} & \includegraphics[width=0.28\textwidth]{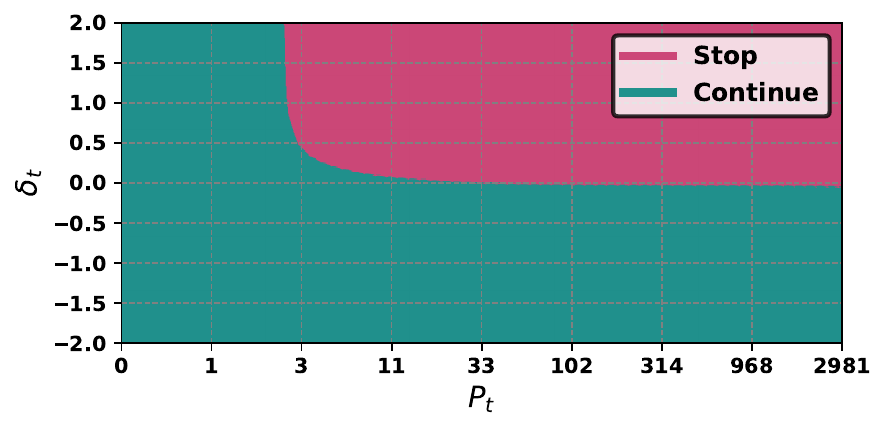} & \includegraphics[width=0.28\textwidth]{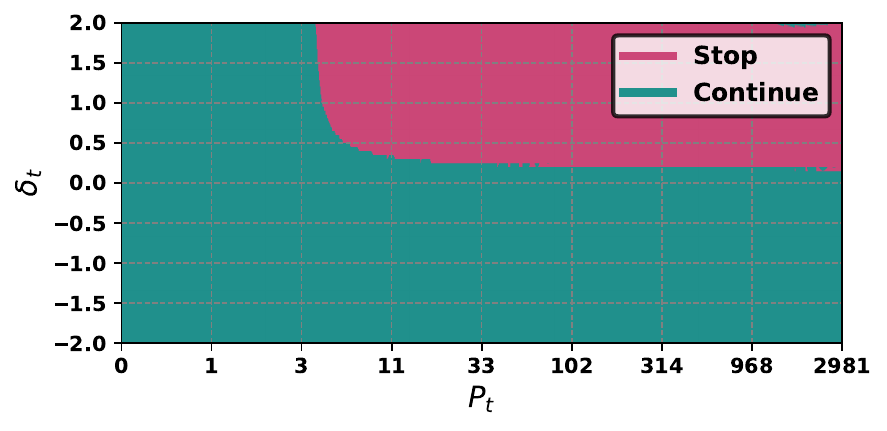} \\
\end{tabular}
\caption{{\footnotesize Stopping and continuation regions for the forestry investment with carbon sequestration value.}}
\label{fig: optimal stopping boundary 1 carbon}
\end{figure}

\subsection{Forest lease values under full parameter uncertainty}\label{sec: Forest lease values under full parameter uncertainty}
Let us assume that the estimated parameter values in Section \ref{sec: data summary} represent the ``true'' market measure $\Qb$. Under the no-arbitrage assumption, all asset prices, including the ``true'' lease value, have to be computed under the measure $\Qb$, and lease values ($v$) obtained in this way generally differ from the subjective lease values ($v^u$) obtained from \eqref{eq: F-reduced Qu value under Q}. The optimal harvesting strategy under the ``true'' no uncertainty scenario is $\tauh$ and the harvesting strategies $\tauh^+$ and $\tauh^-$ reflecting subjective beliefs under the uncertainty assumption can be assessed against $\tauh$. By computing the lease values corresponding to $\tauh^+$ and $\tauh^-$, respectively, under the pricing measure $\Qb$ and comparing with the lease value $v$ of $\tauh$ we obtain a measure of the cost that uncertainty generates for the forestry management. More specifically, we first derive approximate optimal stopping rules $\tauh^+$ and $\tauh^-$ for agents considering parameter uncertainty and then evaluate the forest lease values for these agents under $\Qb$ by applying their corresponding stopping rules as in \eqref{eq:lease value est}. We use $10^3$ re-simulated sample paths under $\Qb$ in the numerical method detailed in Appendix \ref{appendix: algo} to evaluate the forest lease values with the estimated stopping rule. We run the algorithm 10 times and compute the average and standard deviation of the estimated forest lease values. The average GPU computational time per run is about 130 seconds. We compute the difference in value (DIV) with respect to the lease value estimate when no parameter uncertainty is considered. Our proposed algorithm provides an efficient way to quantify the possible profit and loss in the trade of forest leases due to uncertainty. For this purpose, we define DIV as $\tfrac{z_2-z_1}{z_1}$, where $z_1$ stands for the forest lease value under no parameter uncertainty, obtained under $\Qb$, and $z_2$ is the SRODP estimate with parameter uncertainty. We present the results in Table \ref{table: valuation results}. It can be observed that the forest lease value is depreciated in both the optimistic and conservative cases. Interestingly, the optimistic case exhibits a much more significant decrease, with an estimated loss of approximately 15\%, while an agent with a conservative belief incurs only very minor losses. Intuitively this reflects the rule, ``when in doubt, better to be on the conservative side.'' It also provides clear policy advice, that in the case of parameter uncertainty, forestry valuation and harvesting policies should be determined from the assumption of conservative beliefs. However, more generally the findings highlight the substantial impact of parameter uncertainty on forest lease valuation.
\begin{table}[H]
\centering
\begin{tabular}{ccccc}
\hline
Case           & SRODP   & SRODP & 95\% Confidence        & DIV    \\
               & Est     & Std   & Interval               & (\%)   \\ \hline
No uncertainty & 1637.08 & 99.14 & {[}1575.63, 1698.53{]} & 0      \\ \hline
Optimistic     & 1394.86 & 92.53 & {[}1337.51, 1452.21{]} & -14.80  \\ \hline
Conservative   & 1628.06 & 103.84 & {[}1563.70, 1692.42{]} & -0.55 \\ \hline
\end{tabular}
\caption{{\footnotesize Lease values (in dollars) of one-hectare pine forest land under the two-factor stochastic convenience yield model.}}
\label{table: valuation results}
\end{table}

\section{Conclusion}\label{sec: conclusion}
We studied the lease valuation and optimal harvesting strategies, key instruments for forestry investment, within a framework involving stochastic financial models and catastrophe risk while accounting for model parameter uncertainty. Catastrophe arrivals were modeled using a Poisson point process. We used a two-factor stochastic convenience yield model to describe the lumber price dynamics, estimated via Kalman filtering and maximum likelihood estimation using lumber futures data. The Poisson jump intensity for catastrophe arrivals was estimated by facilitating major wildfire data for Douglas County, Oregon, U.S. The parameter uncertainty set was quantified by confidence regions derived from statistical inference on these parameter estimates. After that, we derived a theoretical representation of forest lease values as solutions to reflected backward stochastic differential equations (RBSDEs) under model uncertainty. Finally, we examined the effects of statistical parameter uncertainty and the inclusion of carbon sequestration value on forest lease values through Monte Carlo experiments. We also analyzed the impact of uncertainty on optimal harvesting times and stopping boundaries under optimistic, no uncertainty, and conservative scenarios using numerical experiments. Our results indicate that an optimistic probability belief delays optimal harvesting, while a conservative probability belief accelerates it. But neither are ``truely'' optimal. By benchmarking against the true optimal harvesting rule, we assess the cost that uncertainty imposes on forestry management, an element that will become increasingly important in the context of climate change.  We conclude that in the presence of parameter uncertainty, it is better to lean toward a conservative strategy reflecting, to some extent, the worst case than being overly optimistic.
Finally, we investigated how acknowledging the potential value of carbon sequestration changes harvesting and lease value, an element that becomes increasingly important in the process of the energy transition.  

\appendix

\section{Numerical algorithm}
\label{appendix: algo}
We adapt the Stratified Regression One-step Forward Dynamic Programming (SRODP) algorithm of \citet{agarwal2023monte} to approximately solve the RBSDE in \eqref{eqSRODP1}-\eqref{eqSRODP3}. The main idea is to obtain an accurate approximation of the conditional expectation in \eqref{eqSRODP1}-\eqref{eqSRODP2} which is done by leveraging the classical idea of stratified sampling (see \citet{gobet2016stratified}). In this algorithm, we first set the space domain $[E^1,E^2]\times[E^3,E^4]$ for simulating the values of $X^{\pi}_{t_i}$ at each $t_i$ with $E^1,E^2,E^3$ and $E^4$ chosen appropriately. This approach is called stratified sampling where the initial sample value is uniformly sampled within a specified range. For achieving higher accuracy, we further partition $[E^1,E^2]$ into $J$ disjoint cells and $[E^3,E^4]$ into another $J$ disjoint cells, each of equal width. These partitions create a total of $J^2$ hypercubes $\Hc_j \subset \Rb^2:\, 1\leq j\leq J^2$. Then for each hypercube $\Hc_j$, we simulate $M$ independent samples of $X^{\pi}_{t_i}$ from a logistic distribution. Next we perform least-squares Monte Carlo basis function regressions to approximate the conditional expectations of the form $\Eb^{\Qb}(\cdot|X^{\pi}_{t_i})$ in each hypercube. We denote the $m$-th simulated sample of $X^{\pi}_{t_i}$ as $X^{(m,j)}_{t_i}$ and use the following function representation for $(Y^{\pi}, Z^{\pi})$:
$y_{t_i}(X^{\pi}_{t_i}) = Y^{\pi}_{t_i}, \,\, z_{t_i}(X^{\pi}_{t_i}) = Z^{\pi}_{t_i}, \, i=0,...,N,$
and their basis function regression approximation in the hypercube $\Hc_j$ as $\big(y^{M,j}_{t_i}(\cdot), z^{M,j}_{t_i}(\cdot)\big),$ respectively. Once we have numerically solved the RBSDE, we compute the approximate optimal strategies in \eqref{eq:est opt stop time} and \eqref{eq:est opt stop time nounc}. Finally, we use the Monte Carlo estimator to compute the forest lease valuations in \eqref{eq:lease value est} based on different optimal stopping strategies. The complete numerical procedure is given as follows:

\begin{itemize}
\item \textbf{Step 1:}  Initialization $(i=N):$ Set $y_{t_N}(\cdot)$ = $S(t_N,\cdot)$
\item \textbf{Step 2:} For $i = N-1, \ldots, 1,$ and for {$j\in\{1,...,J^2\}$}: Generate $M$ samples $(X^{(m,j)}_{t_i})_{1\leq m\leq M}$ of $X^{\pi}_{t_i}$ with logistic distribution and $M$ samples $\big(\Delta \Wt^{(m,j)}_{t_i}\big)_{1\leq m\leq M}$ with normal distribution $\mathcal{N}(0,\Delta t)$. Simulate $M$ paths of the forward process $X^{\pi}$ in \eqref{eq: discrete state vector} for one step from $t_i$ to $t_{i+1}$.
\item \textbf{Step 3:} Within each hypercube $\Hc_j$, estimate regression coefficients $b^{M,j}_{z,t_i}$ and $b^{M,j}_{y,t_i}$ with basis function $\phi(\cdot)$, and compute $z^{M,j}_{t_i}(\cdot)$ and $y^{M,j}_{t_i}(\cdot)$ (either sequentially or in parallel across $j$).
\begin{itemize}
    \item[a.] Estimate $b^{M,j}_{z,t_i}$ and compute $z^{M,j}_{t_i}(\cdot)$:
    \eqlnostar{SRODPstep3-1}{
    & b^{M,j}_{z,t_i} = \arginf_{b}\frac{1}{M}\sum\nolimits_{m=1}^M\Big|b\cdot\phi\Big(X_{t_i}^{(m,j)}\Big) - \frac{1}{\Delta t}\mathbb{E}^{\Qb}\Big[y^{M,j}_{t_{i+1}}\Big(X^{(m,j)}_{t_{i+1}}\Big)\Delta \Wt^{(m,j)}_{t_i}|X_{t_i}^{(m,j)}\Big]\Big|^2,\\
    \label{SRODPstep3-2}
    & z^{M,j}_{t_i}\Big(X^{(m,j)}_{t_i}\Big) = b^{M,j}_{z,t_i}\cdot \phi\Big(X^{(m,j)}_{t_i}\Big),
    }
    \item[b.\label{alg: extra step}] For computing $f^+$ calculate maximum of $f(t_i,X_{t_i}^{(m,j)},y^{M,j}_{t_{i+1}},z^{M,j}_{t_{i}}(X^{(m,j)}_{t_{i}}),u)$ with respect to $u$ over the eight corners of $U =[\underline{\kappa},\overline{\kappa}]\times[\underline{\mu},\overline{\mu}]\times[\underline{\lambda},\overline{\lambda}]$,
    where $f(t,x,y,z,u)$ is explicitly represented in \eqref{eq: rbsde Qu 1}. 
    For $f^-$ calculate the minimum of $f(t_i,X_{t_i}^{(m,j)},y^{M,j}_{t_{i+1}}(X^{(m,j)}_{t_{i+1}}),z^{M,j}_{t_{i}}(X^{(m,j)}_{t_{i}}),u)$ with respect to $u$ over the eight corners of $U$.
    
    \item[c.] Estimate $b^{M,j}_{y,t_i}$ and compute $y^{M,j}_{t_i}(\cdot)$:
    \eqlnostar{}{
    b^{M,j}_{y,t_i} & = \arginf_{b}\frac{1}{M}\sum\nolimits_{m=1}^M\Big|b\cdot\phi\Big(X_{t_i}^{(m,j)}\Big) - \mathbb{E}^{\Qb}\Big[y^{M,j}_{t_{i+1}}\Big(X^{(m,j)}_{t_{i+1}}\Big) \\
    & + f^\pm\Big(t_i,X^{(m,j)}_{t_i}, y^{M,j}_{t_{i+1}}\Big(X^{(m,j)}_{t_{i+1}}\Big), z^{M,j}_{t_{i}}\Big(X^{(m,j)}_{t_{i}}\Big)\Big)\Delta t|X_{t_i}^{(m,j)}\Big]\Big|^2,\\
    y^{M,j}_{t_i} & \Big(X^{(m,j)}_{t_i}\Big) = \max\Big\{b^{M,j}_{y,t_i}\cdot \phi\Big(X^{(m,j)}_{t_i}\Big), S\Big(t_i,X_{t_i}^{(m,j)}\Big)\Big\}.
    }
\end{itemize}
\item \textbf{Step 4:} Simulate $M'$ paths of $X^{\pi}$ from $t_0$ to $t_N$. For the $m$-th path ($m=1,\ldots,M'$), estimate optimal stopping strategy by computing 
\eqstar{
\widetilde{\tau}^m := \min\left\{t_i, i=0,...,N: y^{M',j}_{t_i}\left(X_{t_i}^{m,j}\right) = S\left(t_i,X_{t_i}^{m,j}\right)\right\}.
}
\item \textbf{Step 5:} Compute lease value estimate using $M'$ samples of the forward process $X$ as
\eqstar{
\frac{1}{M'} & \sum^{M'}_{m=1} \bigg(B_{0,\widetilde{\tau}^m}^{-1}\left(P^m_{\widetilde{\tau}^m}G_{\widetilde{\tau}^m} - K\right)\Gamma^{\lambda}_{0, \widetilde{\tau}^m} + \int_0^{\widetilde{\tau}^m}B_{0,t}^{-1}\Gamma^{\lambda}_{0,t} A_{t} \dd t \bigg) \\
& = \frac{1}{M'}\sum^{M'}_{m=1}\left(e^{(50-\widetilde{\tau}^m)(r+\lambda^{\Qb})-50r}\left(P^m_{\widetilde{\tau}^m}G_{\widetilde{\tau}^m} - K\right) + \frac{A(e^{-50(r+\lambda^{\Qb})} - e^{-(r+\lambda^{\Qb})\widetilde{\tau}^m})}{r+\lambda^{\Qb}} + \frac{A(1 - e^{-50r})}{r}\right).
}
\end{itemize}

\section{Proofs}\label{appendix: proofs}

\begin{proof}[\textbf{Proof of Lemma \ref{lemma: Qu value}}]
\begin{enumerate}
    \item[(i)] Since the Radon-Nikodym derivative 
\eqlnostar{}{\eta_t^{\alpha,\psi} = \frac{\dd \Qb^u}{\dd \Qb}\Big|_{{\Gc}_t}}
in \eqref{eq: RN derivative 2} is defined with respect to the enlarged $\sigma$-field $\Gc_t$, we need to reduce it to $\sigma$-field $\Fc_t$ for our case. To this end, we write that, for every $t\in[0, T]$,
\eqlnostar{}{\frac{\dd \Qb^u}{\dd \Qb}\Big|_{{\Fc}_t} = \Eb^{\Qb}\left(\eta_t^{\alpha,\psi}\big|\Fc_t\right) = \Eb^{\Qb}\left(\eta_t^{\alpha}\eta_t^{\psi}\big|\Fc_t\right) = \eta_t^{\alpha}\Eb^{\Qb}\left(\eta_t^{\psi}\big|\Fc_t\right).}
Further, we have
\eqlnostar{}{\Eb^{\Qb}\left(\eta_t^{\psi}\big|\Fc_t\right) = \Eb^{\Qb}\left(\Ec_t\left(\int_{(0, \,\cdot]}\left(\psi_k - 1\right)\dd \Mt_k\right)\bigg|\Fc_{t}\right) = 1,}
where the second equality can be seen as a consequence of (or Lemma 3.14 in \citet{aksamit2017enlargement} with $\Pb=\Qb$, $\gamma = \psi - 1$ and $M = \Mt$) the martingale property of the Dol\'eans-Dade exponential, since the conditional cumulative distribution function of $\xi$ given $\Fc_{t}$ has the form $1-\exp(-\int_0^t \lambda^{\Qb}_k\dd k)$. Then the Radon-Nikodym derivative with respect to the filtration $\Fb$ satisfies:
\eqlnostar{eq: F-reduced Radon-Nikodym derivative}{\frac{\dd \Qb^u}{\dd \Qb}\Big|_{{\Fc}_t} = \eta_t^{\alpha}, \,\,\text{for}\,\,t\in[0, T].}
Thus the claimed equality \eqref{eq: F-reduced Qu value under Q} holds by combining \eqref{eq: F-reduced Qu value}, \eqref{eq: F-reduced Radon-Nikodym derivative} and the abstract Bayes formula.

\item[(ii)] It suffices to apply Proposition 7.1 in \citet{el1997reflected} with $\Gamma_t = B^{-1}_{0,t}\Gamma^u_{0,t}\eta^{\alpha}_t$, $Y_t = v^{\Qb^u}_t$, $\beta_t = -(r+\lambda^u_t)$, $\gamma_t = \alpha_t$, $\delta_t = A$, $\xi = S_T$, $S_t = S_t$ and $B_t = \Wt_t$ to obtain \eqref{eq: rbsde Qu 1}.
\end{enumerate}
\end{proof}

\begin{proof}[\textbf{Proof of Theorem \ref{thm: robust rbsde}}]
We prove the first result mainly based on the comparison theorem for RBSDEs (see \citet[Theorem 4.1]{el1997reflected}) and the second one follows similarly. First, since $f^+(t,x,y,z)\geq f(t,x,y,z,u)$ by definition, we have that $Y^+_t \geq Y^u_t$ for all $t\in[0,T]$ and all $u\in\Uc$ by the comparison theorem. Then we have $Y^+_t \geq v^+_t, \, \forall t\in[0,T]$.

Next, we prove that $Y^+_t \leq v^+_t, \, \forall t\in[0,T]$. Note that for any $\varepsilon>0,$ there exists a Borel measurable function $u^{\varepsilon}: [0,T]\times\Rb^2\times\Rb\times\Rb^2\rightarrow\Uc$ such that 
\eqlnostar{}{f^+(t,x,y,z) - \varepsilon\leq f\left(t,x,y,z,u^{\varepsilon}(t,x,y,z)\right),}
by the Borel measurable selection theorem (see, for example, \citet[Chapter 7]{bertsekas1996stochastic}). Let $Y^\varepsilon_t = v^\varepsilon_t, \, t\in[0,T]$ be the corresponding solution of the RBSDE with generator $f\left(t,x,y,z,u^{\varepsilon}(t,x,y,z)\right)$. Since $Y^+_t - \varepsilon(T-t)$ solves the RBSDE with generator $f^+(t,x,y,z) - \varepsilon$ and obstacle $S_t - \varepsilon(T-t)$, we have that for every $\varepsilon>0,$ $Y^+_t - \varepsilon(T-t)\leq v^\varepsilon_t\,\, \forall t\in[0,T]$ by the comparison theorem again. Therefore, $Y^+_t \leq v^+_t, \, \forall t\in[0,T]$, and the proof is complete.
\end{proof}
\begin{theorem}
\label{theorem 2}
For a fixed EMM $\Qb\in\Qc$, the value $V$ of a forestry investment defined in \eqref{eq: value function 1} satisfies: $\Ib_{\{\xi>t\}}V_t = v_t$ for $t\in[0,T],$ where $v_t$ is called the $\Fb$-reduced value of the forestry investment under $\Qb$, denoted by
\eqstar{v_t := \underset{\tau\in\Tc_{t,T}}{\sup}\, \Eb^{\Qb}\left(B^{-1}_{t,\tau}\Gamma^{\lambda}_{t,\tau}\left(P_{\tau}G_{\tau} - K\right) + \int_t^{\tau}B_{t,u}^{-1}\Gamma^{\lambda}_{t,u}A_u\dd u\Big|\Fc_t\right),}
where $\Gamma^{\lambda}$ is the conditional survival probability of $\xi$ under $\Qb$ measure:
\eqlnostar{}{\Gamma^{\lambda}_{s,t}:=\Qb(\xi>t|\xi>s,\Fc_s) = \exp\left(-\int_s^t \lambda^{\Qb}_u\dd u\right).}
\end{theorem}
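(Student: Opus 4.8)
The plan is to decompose the value process $V_t$ in \eqref{eq: value function 1} by conditioning on the catastrophe event $\{\xi>t\}$ and reducing all $\Gc_t$-conditional expectations to $\Fc_t$-conditional ones. First I would note that on the event $\{\xi \leq t\}$ the supremum term in \eqref{eq: value function 1} vanishes identically (every integrand and terminal payoff carries the factor $\Ib_{\{\tau<\xi\}}$ or is integrated over $[t\wedge\xi,\tau\wedge\xi]=\emptyset$), so multiplying $V_t$ by $\Ib_{\{\xi>t\}}$ annihilates the $\Ib_{\{\xi\leq t\}}\zeta_\xi$ term and leaves exactly $\Ib_{\{\xi>t\}}$ times the optimal-stopping term. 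Hence it suffices to show
\eqstar{
\Ib_{\{\xi>t\}}\,\underset{\tau\in\Tc_{t,T}}{\sup}\, \Eb^{\Qb}\!\left(B_{t,\tau}^{-1}\left(P_{\tau}G_{\tau} - K\right)\Ib_{\{\tau<\xi\}} + \int_{t\wedge\xi}^{\tau\wedge\xi} B_{t,u}^{-1} A_u\dd u\bigg| \Gc_t\right) = v_t.
}

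The core computation is the key-lemma reduction already quoted in the main text just before \eqref{eq: F-reduced Q value}: for a \emph{fixed} $\tau\in\Tc_{t,T}$, using that $G$ is deterministic, $P$ and $A$ are $\Fb$-adapted, $\tau$ is an $\Fb$-stopping time, and $\xi$ is conditionally exponential given $\Fc_t$ with hazard $\lambda^{\Qb}$, one has the identity
\eqstar{
\Ib_{\{\xi>t\}}\Eb^{\Qb}\!\left(B_{t,\tau}^{-1}\left(P_{\tau}G_{\tau} - K\right)\Ib_{\{\tau<\xi\}} + \int_{t\wedge\xi}^{\tau\wedge\xi} B_{t,u}^{-1} A_u\dd u\bigg| \Gc_t\right) = \Ib_{\{\xi>t\}}\Eb^{\Qb}\!\left(B_{t,\tau}^{-1}\Gamma^{\lambda}_{t,\tau}\left(P_{\tau}G_{\tau} - K\right) + \int_{t}^{\tau} B_{t,u}^{-1}\Gamma^{\lambda}_{t,u} A_u\dd u\bigg| \Fc_t\right),
}
which is the cited Lemma 3.1 in \citet{elliott2000models}: on $\{\xi>t\}$, $\Qb(\tau<\xi\mid\Fc_\tau\vee\sigma(\Ib_{\{\xi>t\}}))=\Gamma^{\lambda}_{t,\tau}$, and likewise the inner time integral against $\dd u$ picks up the survival weight $\Gamma^{\lambda}_{t,u}$ after applying Fubini and the conditional law of $\xi$. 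I would write this out for the terminal term via the tower property over $\Fc_\tau$ and for the running term via Fubini plus $\Qb(\xi>u\mid\Fc_u,\xi>t)=\Gamma^{\lambda}_{t,u}$.

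The remaining step is to pass the supremum through: since the displayed identity holds for each fixed $\tau$, and the right-hand side is $\Fc_t$-measurable after the $\Ib_{\{\xi>t\}}$ factor is stripped (note $\Tc_{t,T}$ is, by definition in the paper, the set of $\Fb$-stopping times valued in $[t,T]$, so no enlargement of the admissible class occurs when reducing from $\Gb$ to $\Fb$), taking the essential supremum over $\tau\in\Tc_{t,T}$ on both sides and using that $\Ib_{\{\xi>t\}}\operatorname*{ess\,sup}_\tau(\cdots)=\operatorname*{ess\,sup}_\tau \Ib_{\{\xi>t\}}(\cdots)$ (valid because $\Ib_{\{\xi>t\}}\geq 0$ and the family is directed upward under $\Qb$) yields $\Ib_{\{\xi>t\}}V_t=\Ib_{\{\xi>t\}}v_t=v_t$, where the last equality holds because $v_t$ itself carries no catastrophe information and is by construction the $\Fc_t$-conditional object already restricted correctly; more precisely one checks $v_t\Ib_{\{\xi\le t\}}$ is irrelevant since the claimed equation only asserts equality on $\{\xi>t\}$, matching the statement $\Ib_{\{\xi>t\}}V_t=v_t$ as written (with $v_t$ understood on that event).

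The main obstacle I anticipate is the rigorous justification of the conditional-expectation reduction — specifically, handling the running integral $\int_{t\wedge\xi}^{\tau\wedge\xi}$ whose \emph{upper} limit depends on $\xi$. One must argue carefully that, conditionally on $\Fc_T$ and on $\{\xi>t\}$, $\xi$ has density $\lambda^{\Qb}_u\Gamma^{\lambda}_{t,u}$ on $(t,\infty)$, write $\int_{t\wedge\xi}^{\tau\wedge\xi}B_{t,u}^{-1}A_u\,\dd u = \int_t^\tau B_{t,u}^{-1}A_u\Ib_{\{u<\xi\}}\,\dd u$ on $\{\xi>t\}$, and then apply Fubini (justified by boundedness of $A$, continuity of $B^{-1}$, and $\tau\le T$) together with $\Eb^{\Qb}[\Ib_{\{u<\xi\}}\mid\Fc_u,\xi>t]=\Gamma^{\lambda}_{t,u}$ and the stability of conditioning under the tower property from $\Fc_u$ up to $\Fc_t$. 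This is exactly the content invoked from \citet{elliott2000models} and \citet{bielecki2004credit}; everything else is bookkeeping with Hypothesis \ref{hypothesis H} ensuring $W$ (hence $P$, $\delta$, and all $\Fb$-adapted integrands) behaves identically under $\Gb$ and $\Fb$.
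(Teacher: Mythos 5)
Your proposal is correct and arrives at the same reduction identity as the paper, but it handles the running amenity integral by a genuinely different (and somewhat more elementary) route, so a short comparison is in order. For the terminal payoff $\Ib_{\{\tau<\xi\}}B^{-1}_{t,\tau}(P_\tau G_\tau-K)$ you do exactly what the paper does: tower over $\Fc_\tau$ and invoke the key lemma (Lemma 3.1 of Elliott et al.) to replace $\Ib_{\{\tau<\xi\}}$ by the survival weight $\Gamma^{\lambda}_{t,\tau}$ on $\{\xi>t\}$. The divergence is in the amenity term: the paper decomposes $\int_{t\wedge\xi}^{\tau\wedge\xi}B^{-1}_{t,u}A_u\dd u$ over the two events $\{\tau<\xi\}$ and $\{t<\xi\leq\tau\}$, treats the first piece with Lemma 3.1, treats the second piece $R^2_{t,\xi}=\int_t^{\xi}B^{-1}_{0,u}A_u\dd u$ with the predictable-projection formula (Proposition 3.4 of Elliott et al., using that an $\Fb$-adapted left-continuous process is predictable), and then needs an It\^o integration-by-parts identity $R^2_{t,\tau}\Gamma^{\lambda}_{t,\tau}-\int_t^{\tau}R^2_{t,u}\dd\Gamma^{\lambda}_{t,u}=\int_t^{\tau}B^{-1}_{0,u}\Gamma^{\lambda}_{t,u}A_u\dd u$ to recombine the two pieces into the stated form. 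You instead observe that on $\{\xi>t\}$ one has $\int_{t\wedge\xi}^{\tau\wedge\xi}B^{-1}_{t,u}A_u\dd u=\int_t^{\tau}B^{-1}_{t,u}A_u\Ib_{\{u<\xi\}}\dd u$, and apply conditional Fubini together with the same key lemma pointwise in $u$ (using that $\{\tau\geq u\}\in\Fc_u$), which produces $\int_t^{\tau}B^{-1}_{t,u}\Gamma^{\lambda}_{t,u}A_u\dd u$ in one step. Your route needs only the single projection formula and no integration by parts, at the cost of a Fubini justification (boundedness of $A$ and $\tau\leq T$ suffice, as you note); the paper's route is the one that generalizes to recovery payments that are genuine functionals of $\xi$ rather than integrals of an $\Fb$-adapted flow. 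Your remaining steps --- annihilation of the $\Ib_{\{\xi\leq t\}}\zeta_\xi$ term, interchange of the indicator with the essential supremum, and the explicit caveat that $\Ib_{\{\xi>t\}}V_t=v_t$ is really an equality on $\{\xi>t\}$ (a point the paper's last line glosses over) --- are all sound and match the paper's conclusion.
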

\begin{proof}[\textbf{Proof of Theorem \ref{theorem 2}}]
Let the $\Fb$-stopping time $\tau\in\Tc_{t,T}$ be a harvesting strategy. The expected value $V_t^{\tau}$ of this strategy defined as
\eqlnostar{}{V_t^{\tau} := \Eb^{\Qb}\left(B_{t,\tau}^{-1}\left(P_{\tau}G_{\tau} - K\right)\Ib_{\{\tau<\xi\}} + \int_{t\wedge\xi}^{\tau\wedge\xi} B_{t,u}^{-1} A_u\dd u\bigg| \Gc_t\right)}
satisfies
\eqlnostar{eq: theorem 1 proof 1}{V_t^{\tau} & = \Eb^{\Qb}\left(B_{t,\tau}^{-1}\left(P_{\tau}G_{\tau} - K\right)\Ib_{\{\tau<\xi\}} + \Ib_{\{\tau<\xi\}}\int_t^{\tau} B_{t,u}^{-1} A_u\dd u + \Ib_{\{t<\xi\leq\tau\}}\int_t^{\xi} B_{t,u}^{-1} A_u\dd u\bigg| \Gc_t\right).}
Let us introduce two auxiliary processes $R^1$ and $R^2$ by setting
\eqlnostar{}{R^1_{t} = B_{0,t}^{-1}\left(P_{t}G_{t} - K\right),\quad\quad R^2_{s,t} = \int_s^{t}B^{-1}_{0,u}A_u\dd u, \,\,\text{for}\,\,0\leq s\leq t\leq T.}
Then $V_t^{\tau}$ can be represented as 
\eqlnostar{eq: theorem 1 proof 2}{V_t^{\tau} & = B_{0,t}\Eb^{\Qb}\left(R^1_{\tau}\Ib_{\{\tau<\xi\}} + \Ib_{\{\tau<\xi\}}R^2_{t,\tau} + \Ib_{\{t<\xi\leq\tau\}}R^2_{t,\xi}\Big| \Gc_t\right).}
For the first part on the right-hand side of \eqref{eq: theorem 1 proof 2}, since $\{\tau<\xi\}$ is $\Gc_{\tau}$-measurable hence $\Gc_{T}$-measurable, $\Ib_{\{\tau<\xi\}}R^1_{\tau}$ is $\Gc_{T}$-measurable. By applying Lemma 3.1 in \citet{elliott2000models} we know that
\eqlnostar{}{\Eb^{\Qb}\left(\Ib_{\{\tau<\xi\}}R^1_{\tau}\big| \Gc_t\right) = \Eb^{\Qb}\left(\Ib_{\{\tau<\xi\}}R^1_{\tau}\big| \Gc_t\right)\Ib_{\{\xi>t\}} & = \frac{\Eb^{\Qb}\left(\Ib_{\{\tau<\xi\}}R^1_{\tau}\Ib_{\{\xi>t\}}\big| \Fc_t\right)}{\Eb^{\Qb}\left(\Ib_{\{\xi>t\}}\big| \Fc_t\right)}\Ib_{\{\xi>t\}}\\
& = \frac{\Eb^{\Qb}\left(\Eb^{\Qb}\left(\Ib_{\{\tau<\xi\}}R^1_{\tau}\big|\Fc_{\tau}\right)\big| \Fc_t\right)}{\Eb^{\Qb}\left(\Ib_{\{\xi>t\}}\big| \Fc_t\right)}\Ib_{\{\xi>t\}}\\
& = \Eb^{\Qb}\left(R^1_{\tau}\cdot\frac{\Eb^{\Qb}\left(\Ib_{\{\tau<\xi\}}\big|\Fc_{\tau}\right)}{\Eb^{\Qb}\left(\Ib_{\{\xi>t\}}\big| \Fc_t\right)}\Bigg|\Fc_t\right)\Ib_{\{\xi>t\}}\\
\label{eq: theorem 1 proof 3}
& = \Eb^{\Qb}\left(R^1_{\tau}\Gamma^{\lambda}_{t,\tau}|\Fc_t\right)\Ib_{\{\xi>t\}},}
where the third equality follows the tower property, the fourth equality follows the $\Fc_{\tau}$-measurability of $R^1_{\tau}$ and the last is the definition of $\Gamma^{\lambda}$. Similarly, we have the following result for the second part on the right-hand side of \eqref{eq: theorem 1 proof 2}
\eqlnostar{eq: theorem 1 proof 4}{\Eb^{\Qb}\left(\Ib_{\{\tau<\xi\}}R^2_{t,\tau}\big| \Gc_t\right) = \Eb^{\Qb}\left(R^2_{t,\tau}\Gamma^{\lambda}_{t,\tau}|\Fc_t\right)\Ib_{\{\xi>t\}},}

For the third part on the right-hand side of \eqref{eq: theorem 1 proof 2}, note that $\Ib_{\{t<\xi\leq\tau\}}R^2_{t,\xi}$ is left-continuous in $\xi$. It is well-known that every $\Fb$-adapted left-continuous process is $\Fb$-predictable. Therefore, we can apply Proposition 3.4 in \citet{elliott2000models} to obtain
\eqlnostar{}{\Eb^{\Qb}\left(\Ib_{\{t<\xi\leq\tau\}}R^2_{t,\xi}\big| \Gc_t\right)\Ib_{\{\xi>t\}} = & -\Ib_{\{t<\xi\}}\Eb^{\Qb}\left(\int_t^{\infty}\Ib_{\{\tau\geq u\}}R^2_{t,u}\exp\left(-\int_t^u\lambda^{\Qb}_s\dd s\right)\lambda^{\Qb}_u \dd u\bigg| \Fc_t\right)\Ib_{\{\xi>t\}}\\
& + \Ib_{\{\xi\leq t\}}\Ib_{\{\tau\geq\xi\}}R^2_{t,\xi}\Ib_{\{\xi>t\}}\\
\label{eq: theorem 1 proof 5}
= & -\Eb^{\Qb}\left(\int_t^{\tau}R^2_{t,u} \Gamma^{\lambda}_{t,u}\lambda^{\Qb}_u \dd u\bigg|\Fc_t\right)\Ib_{\{\xi>t\}}.}
To complete the proof, note that $\Gamma^{\lambda}_{t,\cdot}$ is continuous and $R^2_{t,\cdot}$ is also a continuous process with bounded variation and $R^2_{t,t} = 0$, so with It\^o's Lemma we have:
\eqlnostar{}{R^2_{t,\tau}\Gamma^{\lambda}_{t,\tau} - \int_t^{\tau}R^2_{t,u}\dd \Gamma^{\lambda}_{t,u} = \int_t^{\tau}\Gamma^{\lambda}_{t,u}\dd R^2_{t,u} = \int_t^{\tau}B_{0,u}^{-1}\Gamma^{\lambda}_{t,u}A_u\dd u.}
Then, by combining \eqref{eq: theorem 1 proof 3}, \eqref{eq: theorem 1 proof 4} and \eqref{eq: theorem 1 proof 5} we obtain
\eqlnostar{}{V_t^{\tau} & = B_{0,t}\Eb^{\Qb}\left(R^1_{\tau}\Gamma^{\lambda}_{t,\tau} + \int_t^{\tau}B_{0,u}^{-1}\Gamma^{\lambda}_{t,u}A_u\dd u\Big|\Fc_t\right)\\
& = \Eb^{\Qb}\left(B^{-1}_{t,\tau}\Gamma^{\lambda}_{t,\tau}\left(P_{\tau}G_{\tau} - K\right) + \int_t^{\tau}B_{t,u}^{-1}\Gamma^{\lambda}_{t,u}A_u\dd u\Big|\Fc_t\right).}
Thus, take $\Ib_{\{\xi>t\}}V_t = \underset{\tau\in\Tc_{t,T}}{\sup}\, V_t^{\tau}$ and we obtain $\Ib_{\{\xi>t\}}V_t = v_t$.
\end{proof}

\singlespacing
\normalem
\printbibliography
\end{document}